\newenvironment{proof}{\noindent\emph{Proof\ }}{\hspace*{\fill}$\Box$\medskip}
\newtheorem{theorem}{Theorem}
\newtheorem{definition}{Definition}
\newtheorem{lemma}{Lemma}
\newtheorem{proposition}{Proposition}
\newcommand\restr[2]{{% we make the whole thing an ordinary symbol
  \left.\kern-\nulldelimiterspace % automatically resize the bar with \right
  #1 % the function
  \vphantom{\big|} % pretend it's a little taller at normal size
  \right|_{#2} % this is the delimiter
  }}
\newcommand{\pred}{\texttt{pred}}
\newcommand{\vect}[1]{\ensuremath{\bm{#1}}}
\title{Online Primal-Dual Algorithms with Predictions for Packing Problems}
\author{
Nguyễn Kim Thắng \\
University Paris-Saclay, IBISC, France
\and Christoph D\"urr \\
Sorbonne University, CNRS, LIP6, France
}
\begin{document}

\maketitle

\begin{abstract}
The domain of online algorithms with predictions has been extensively studied
for different applications such as scheduling, caching (paging), clustering, ski rental, etc.
Recently, Bamas et al., aiming for an unified method, have provided a primal-dual framework 
for linear covering problems. They extended the online primal-dual method by incorporating 
predictions in order to achieve a performance beyond the worst-case case analysis.
In this paper, we consider this research line and  present a framework to design algorithms with predictions for non-linear 
packing problems. We illustrate the applicability of our framework in submodular maximization
and in particular ad-auction maximization in which the optimal bound is given and supporting experiments are provided.
\end{abstract}

%\thispagestyle{empty}

%\newpage

%\setcounter{page}{1}

%!TEX root = main.tex

\section{Introduction}

%% short online algorithm

Online computation \cite{BorodinEl-Yaniv05:Online-computation} is a
well-established field in theoretical computer science. In online
computation, inputs are released one-by-one in form of a request sequence.
Requests arrive over time and each request must be answered by the algorithm
with an irrevocable action, without any information about future requests to
come. These actions define a solution to the problem at hand, and the
algorithm wants to optimize its objective value. Usually this value is
normalized by the optimum, which could have been computed if all the requests
were known in beforehand.  The resulting \emph{competitive ratio} measures
the price of not knowing the future requests.

%% machine learning as a way to go beyond the worst case
Several models have been studied, which relax some of the assumptions in the standard online computation model, see  \cite{Roughgarden19:Beyond-worst-case,Roughgarden20:Beyond-the-Worst-Case} for a recent survey.

%% recent trends
One of the models is motivated by spectacular advances in machine learning (ML). 
In particular, the capability of ML methods in predicting patterns of future requests would provide useful information to be exploited by online 
algorithms. A general framework for incorporating ML predictions into algorithm design in order to achieve a better performance than the worst-case one is formally introduced by \cite{LykourisVassilvtiskii18:Competitive-caching}. 
This is rapidly followed by work studying online algorithms with predictions \cite{MitzenmacherVassilvitskii20:Beyond-the-Worst-Case} 
in a large spectrum of problems 
such as scheduling \cite{LattanziLavastida20:Online-scheduling,Mitzenmacher20:Scheduling-with}, 
caching (paging) \cite{LykourisVassilvtiskii18:Competitive-caching,Rohatgi20:Near-optimal-bounds,AntoniadisCoester20:Online-metric}, 
ski rental \cite{GollapudiPanigrahi19:Online-algorithms,KumarPurohit18:Improving-online,AngelopoulosDurr20:Online-Computation}, etc.  In contrast to online computation with advice (see \cite{Komm16:Introduction-to-Online} for a survey), the aforementioned models handle predictions in a careful manner, guaranteeing good performance even if the predictions is completely wrong.
More precisely, on the one hand, if the predictions are accurate, one expects to achieve a good performance 
as in the offline setting where the input are completely given in advance. On the other hand, if the predictions are misleading, one still has to maintain a competitive solution as in the online setting where no predictive information is available.

%% unified methods/primal-dual 
This issue has been addressed (for example the aforementioned work) 
by subtly incorporating predictions and exploiting specific problems' structures.
The primal-dual method is an elegant and powerful technique for the design of algorithms \cite{WilliamsonShmoys11:The-design-of-approximation}, 
especially for online algorithms \cite{BuchbinderNaor09:Online-primal-dual}.
In the objective of unifying previous ad-hoc approaches,  
\cite{BamasMaggiori20:The-Primal-Dual-method} presented a primal-dual 
approach to design 
online algorithms with predictions for covering problems with a linear objective function. In this paper, we build on  previous work and  
provide a primal-dual method with predictions for packing problems with non-linear objectives;
following a research direction and answering some open questions mentioned  in \cite{BamasMaggiori20:The-Primal-Dual-method}.

%The first explicite primal-dual approach in online algorithms
%is given by Buchbinder and Naor [34] who presented general algorithms based on
%the multiplicative weights update method for covering/packing problems. 
%Several uses of the primal-dual method and competitive algorithms for online algorithms then followed; for example the generalized caching problem [17, 16], the ad-auction maximization [35] and others.  

%We measure the performance of an algorithm by the \emph{competitive ratio}. Specifically, 
%an algorithm is $r$-competitive if for any instance, the ratio between the cost of the algorithm and that of 
%an optimal solution is at most $r$.  

\subsection{Problem and Model}  	\label{sec:model}
 
\paragraph{Packing problem} In the packing problem studied in this paper we are given of a sequence of $n$ elements ${\cal E}$, and $m$ unit capacity resources, numbered from $1$ to $m$.  Each element $e\in\cal E$ has a profit $w_e$ and a resource requirement $b_{i,e}\geq 0$ for each resource $i$. The goal is to select a set of elements of maximum total profit, such that no resource capacity is exceeded.  Formally the goal is to chose a vector $\vect{x} \in\{0,1\}^{|\cal E|}$ maximizing $\sum_{e\in\cal E} w_e x_e$ such that for every resource $i\in\{1,\ldots,m\}$ we have $\sum_{e\in\cal E} b_{i,e} x_e \leq 1$.  

\paragraph{Fractional variant}
In the fractional variant of this problem, the integrality constraint on $x$ is removed, and the goal is to optimize over vectors $x\in[0,1]^{|\cal E|}$.
By $I$ we denote the instance of a packing problem, which consists of $\langle {\cal E}, (w_e)_{e\in \cal E}, m, (b_{i,e})_{1\leq i\leq m, e\in\cal E}\rangle$.

\paragraph{Non-linear variant} We consider a generalization of this problem, where the objective is not to maximize the sum $\sum_{e\in\cal E} w_e x_e$, but $f(\vect{x})$ for a given function $f:\{0,1\}^{|\cal E|} \rightarrow {\mathbb R}^+$.  It is assumed that $f$ is monotone, in the sense that $f(\textbf{x})\leq f(\textbf{y})$, whenever $\textbf{y}$ equals $\textbf{x}$ except for a single element $e$ with $x_e=0$ and $y_e=1$. Describing $f$ explicitly needs space $O(2^{|{\cal E}|})$, hence it is assumed that $f$ is given as a black-box oracle.  For the fractional variant, the multilinear extension $F:[0,1]^{|\cal E|} \rightarrow {\mathbb R}^+$ is used instead of $f$, which is defined as
\[
    F(\vect{x}) = \sum_{S\subseteq \cal E} \prod_{e\in S} x_e \prod_{e\not\in S}(1-x_e) f(\vect{1}_S),
\]  
where $\vect{1}_S$ is the characteristic vector of the set $S$. 
Alternatively, $F(\vect{x}) = \mathbb{E} \bigl[ f(\vect{1}_{T})\bigr]$ where $T$ is a random set 
such that every element $e$ appears independently in $T$ with probability $x_{e}$. 
Note that $F(\vect{1}_{S}) = f(\vect{1}_{S})$ for every $S \subseteq \mathcal{E}$ --- $F$ is truly an extension of $f$ --- and if $f$ is monotone then so is $F$, in the sense that $F(\vect x)\leq F(\vect y)$ for all vectors $\vect x,\vect y\in[0,1]^{|\cal E|}$ such that $x_e \leq y_e$ for all elements $e$.

\paragraph{Online packing problem.} In the paper, we consider the model presented in \cite{BamasMaggiori20:The-Primal-Dual-method} 
(with several definitions rooted in \cite{LykourisVassilvtiskii18:Competitive-caching,KumarPurohit18:Improving-online}). In the online packing problem, elements $e\in\cal E$ arrive online. When element $e\in\cal E$ arrives, the algorithm learns its resource requirements $b_{i,e}$. At the arrival of $e$ only $x_e$ can be set by the algorithm, and cannot be changed later on.  Only $m$, the number of resources, are initially known to the algorithm.  The algorithm does not know $\cal E$ initially, not even its cardinality.  Formally an online algorithm $\cal A$ receives the elements of $\cal E$ in arbitrary order, and upon reception of element $e\in\cal E$ needs to choose a value for $x_e$.  At any moment, the vector $\vect x$ must satisfy all $m$ resource capacity constraints of the packing problem.  

We denote by ${\cal A}(I)$ the objective value of the solution produced by $A$ on an instance $I$.  A prediction is an integral value $x^{\pred}_{e}\in\{0,1\}$, which is given to the algorithm with every element $e$.  Confidence in the prediction is described by a parameter $\eta \in (0,1]$. For  convenience we choose larger $\eta$ value to represent smaller confidence.  Similarly to ${\cal A}(I)$, we denote by ${\cal P}(I)$ the objective value of the vector $\vect{x}^{\pred}$ produced by the prediction, and set ${\cal P}(I)=0$ whenever $\vect{x}^{\pred}$ does not satisfy all constraints. Finally we denote by ${\cal O}(I)$ the optimal fractional solution to the packing problem, which could have been computed if the whole instance $I$ were available from the beginning.

Algorithm $\mathcal{A}$ is 
 $c(\eta)$-\emph{consistent} and  $r(\eta)$-\emph{robust} if for every instance $I$, 
 \begin{align*}
 	\mathcal{A}(I) 	\geq 	\max\{c(\eta) \cdot \mathcal{P}(I), r(\eta) \cdot \mathcal{O}(I) \}.
 \end{align*}
Ideally we would like that $c(\eta)$ tends to $1$, when $\eta$ approaches $0$, meaning that with high confidence, the algorithm performs at least as good as the prediction. Additionally we would like that $r(\eta)$ tends to the same competitive ratio as in the standard online setting (without provided predictions), when $\eta$ approaches $1$.
  
\paragraph{Objective function}
At any moment in time, the algorithm maintains a vector $\vect{x}$ over the arrived elements. Conceptually this vector is padded with zeros to create a vector belonging to $[0,1]^{|\cal E|}$, which for  convenience we also denote by $\vect x$.  %We assume that the algorithm can compute the derivative $\nabla_{e} F(\vect{x})$ in constant time.
  
 %%%%%************************************
 %%%%%************************************
\subsection{Approach and Contribution}  
  
We use the primal-dual method in our approach.  
At a high level, it formulates a given problem as a (primal) linear program.  There is a corresponding dual linear program, which plays an important role in the method.  By weak duality, a solution to the dual bounds the optimal solution to the primal.  Every arrival of an element in the online problem, translates in the arrival of a variable in the primal linear program, and a corresponding constraint in the dual linear program. The algorithm updates the fractional solutions to both the primal and 
dual in order to maintain their feasibility. Then, the competitive ratio of the algorithm is established 
by showing that every time an update of primal and dual solutions is made, the increase of the primal 
can be bounded by that of the dual up to some desired factor.   
 
Note that, similar to \cite{BamasMaggiori20:The-Primal-Dual-method}, we focus on constructing fractional solution 
which is the main step in the primal-dual method. In order to derive algorithms for specific problems, online rounding schemes 
are needed and in many problems, such schemes already exist. We will provide references for the rounding schemes
tailor designed for different applications, but do not develop them in this paper.

We follow the primal-dual framework presented in \cite{Thang20:Online-Primal-Dual} to design competitive algorithms for fractional non-linear packing problems.
Given a function $f: \{0,1\}^{n} \rightarrow \mathbb{R}^{+}$, its \emph{multilinear extension} $F: [0,1]^{n} \rightarrow \mathbb{R}^{+}$ 
is defined as $F(\vect{x}) := \sum_{S \subseteq \mathcal{E}} \prod_{e \in S} x_{e} \prod_{e \notin S} (1 - x_{e}) \cdot f(\vect{1}_{S})$
where $\vect{1}_{S}$ is the characteristic vector of $S$ (i.e., the $e^{\textnormal{th}}$-component of 
$\vect{1}_{S}$ equals $1$ if $e \in S$ and equals 0 otherwise). 

The performance of an algorithm will depend on the following measures on the objective function.  

\begin{definition}[\cite{Thang20:Online-Primal-Dual}]	\label{def:max-local-smooth}
A differentiable function $F: [0,1]^{|\cal E|} \rightarrow \mathbb{R}^{+}$ is \emph{$(\lambda,\mu)$-locally-smooth} for some parameters $\lambda,\mu\geq 0$
if for any set $S \subseteq \mathcal{E}$, and for every vector $\vect{x} \in [0,1]^{|\cal E|}$, 
the following inequality holds:
$$
\langle \nabla F(\vect{x}), \vect{1}_{S} \rangle = 
\sum_{e \in S} \nabla_{e} F(\vect{x}) \geq \lambda F\bigl( \vect{1}_{S} \bigr) - \mu F\bigl( \vect{x} \bigr).
$$
where $\nabla_{e} F(\vect{x})$ denotes $\partial F(\vect{x})/\partial x_{e}$.
\end{definition}

Note that the $(\lambda,\mu)$-smoothness notion here differs from the usual notion of smoothness of functions in
convex optimization. The former, introduced in \cite{Thang20:Online-Primal-Dual}, is related to the definition of 
smooth games \cite{Roughgarden15:Intrinsic-Robustness}   
in the context of algorithmic game theory. The parameters $(\lambda,\mu)$ somehow describe how far 
the function is from being linear. For example a non-decreasing linear function is $(1,1)$-locally-smooth.  Another example is the multilinear extension of monotone submodular functions, which is $(1,2)$-locally-smooth \cite{Thang20:Online-Primal-Dual}.

The dual of a linear packing problem is a linear covering problem and vice-versa.  However this breaks when considering non-linear objective functions.  The usual trick is to linearize the model, by introducing auxiliary variables and work with what is called a configuration linear program. Now, how are we going to exploit the prediction? 
In our approach, we incorporate the predictive information in the primal-dual approach by modifying the coefficients of the constraints.  This in turn influences the increase of the primal variables and hence the solution produced by the algorithm.

This approach guarantees that:  
(i) if the confidence on the prediction is high ($\eta$ is close to 0) then value of variables corresponding to the elements selected by the prediction would get a large value; (ii) and inversely, if the confidence on the prediction is low ($\eta$ is close to 1) then
the variables would be constructed as in the standard online primal-dual method. 
The construction follows the multiplicative weights update based on the gradient of the multilinear extension \cite{Thang20:Online-Primal-Dual},
which generalizes the multiplicative update in \cite{BuchbinderNaor09:The-Design-of-Competitive,AzarBuchbinder16:Online-Algorithms}.  
Using the concept of local-smoothness, we show the feasibility of our primal/dual solutions (even when the prediction provides an infeasible one)  and also analyze the performance of the algorithm, as a function of locally-smoothness and confidence parameters.

\begin{restatable}{theorem}{Packing}
\label{thm:packing}
Let $F$ be the multilinear extension of the non-decreasing objective function $f$. Denote the \emph{row sparsity}
$d := \max_{i} |\{b_{ie}: b_{ie} > 0\}|$ and the \emph{divergence}
$\rho := \max_{i} \max_{e,e': b_{ie' > 0}} b_{ie} / b_{ie'}$.  
Assume that $F$ is $(\lambda, \mu)$-locally-smooth 
for some parameters $\lambda > 0$ and $\mu$. 
Then, for every $0 < \eta \leq 1$, 
there exists a $r(\eta)$-consistent and $r(\eta) \cdot \bigl( \frac{\lambda}{2\ln(1+ d\rho/\eta) + \mu} \bigr)$-robust algorithm
for the fractional packing problem where $r(\eta) = \min_{\vect{0} \leq \vect{u} \leq \vect{1}} F(\frac{\vect{u}}{1 + \eta} )/F(\vect{u})$.
\end{restatable}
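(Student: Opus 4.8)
The plan is to recast the fractional non-linear packing problem as a configuration linear program and to run a continuous online primal--dual scheme in the style of \cite{Thang20:Online-Primal-Dual}, injecting the prediction by rescaling the resource requirements the algorithm works with. On arrival of an element $e$ with prediction $x^{\pred}_{e}$ I would let the algorithm use discounted requirements $\tilde b_{i,e}$ that depend on $x^{\pred}_{e}$ and shrink the effective capacities by a factor $1+\eta$, so that on the one hand a $\tfrac{1}{1+\eta}$ fraction of every resource is kept in reserve, and on the other hand predicted elements look ``cheap'' and are driven up quickly. The primal variable then evolves by a multiplicative rule of the form $\partial x_{e}\propto(x_{e}+\delta)\,\partial\theta$ while the dual variables $\alpha_{i}$ of the currently binding resources grow simultaneously; the rate constant is tuned so that $x_{e}$ reaches its target exactly when the corresponding dual covering constraint becomes tight, which is where the term $\ln(1+d\rho/\eta)$ enters --- the $1/\eta$ inside the logarithm being the price of running the update ``hot'' near predicted elements.

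The first technical step is primal feasibility: $\sum_{e} b_{i,e} x_{e}\le 1$ must hold throughout. This is delicate because the update has to be aggressive enough near predicted elements to buy consistency, yet the accumulated load on each resource must stay bounded by the reserved capacity. The row sparsity $d$ and the divergence $\rho$ are exactly the parameters governing the geometric growth of $\sum_{e} b_{i,e} x_{e}$ under a multiplicative rule --- at most $d$ positive terms per row, each within a factor $\rho$ of the smallest positive one --- so a telescoping/geometric-series estimate over the elements touching resource $i$, in the manner of \cite{BuchbinderNaor09:The-Design-of-Competitive,Thang20:Online-Primal-Dual}, yields the logarithmic dependence and the constant inside $\ln(1+d\rho/\eta)$.

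The heart of the proof is the robustness bound, and I expect the main obstacle to be the book-keeping that matches the non-linearity to the primal/dual increments. I would maintain a solution $\vect{\alpha}$ to the dual covering program that stays (approximately) feasible, and show that at each instant the rate of increase of $F(\vect{x})$ dominates a $\tfrac{\lambda}{2\ln(1+d\rho/\eta)+\mu}$ fraction of the rate of increase of the dual objective. The non-linearity is handled precisely by $(\lambda,\mu)$-local-smoothness: applying $\langle\nabla F(\vect{x}),\vect{1}_{S}\rangle\ge\lambda F(\vect{1}_{S})-\mu F(\vect{x})$ with $S$ the support of the offline optimum $\mathcal{O}(I)$, and using the complementary-slackness relations built into the update to identify $\nabla_{e}F(\vect{x})\,\partial x_{e}$ with the corresponding pieces of the dual increment, converts the smoothness inequality into the desired per-step comparison; the factor $2$ comes from splitting the dual increment into its ``primal-charge'' and ``slack'' parts in the usual way. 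Integrating over the execution and invoking weak duality gives $F(\vect{x})\ge\tfrac{\lambda}{2\ln(1+d\rho/\eta)+\mu}$ times the optimum of the rescaled program; since the capacities were shrunk by $1+\eta$, the vector $\mathcal{O}(I)/(1+\eta)$ is feasible for that rescaled program, so its optimum is at least $r(\eta)\,\mathcal{O}(I)$ by the definition of $r(\eta)$, which yields exactly the claimed robustness factor $r(\eta)\cdot\tfrac{\lambda}{2\ln(1+d\rho/\eta)+\mu}$.

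Finally, consistency. If $\mathcal{P}(I)>0$ then $\vect{x}^{\pred}$ is feasible, so $\sum_{e} b_{i,e} x^{\pred}_{e}\le 1$ for every resource $i$, hence $\vect{x}^{\pred}/(1+\eta)$ fits inside the reserved $\tfrac{1}{1+\eta}$ capacity; the discounted requirements $\tilde b_{i,e}$ are chosen precisely so that this room is always available and the update pushes every predicted coordinate up to $x_{e}\ge x^{\pred}_{e}/(1+\eta)$ without the simultaneous online component overflowing any resource. Monotonicity of $F$ then gives $\mathcal{A}(I)=F(\vect{x})\ge F(\vect{x}^{\pred}/(1+\eta))\ge r(\eta)\,F(\vect{x}^{\pred})=r(\eta)\,\mathcal{P}(I)$, so the algorithm is $r(\eta)$-consistent; together with the robustness bound this proves the theorem. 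When $\mathcal{P}(I)=0$ the consistency claim is vacuous and only the robustness argument is needed, which does not rely on feasibility of the prediction at all.
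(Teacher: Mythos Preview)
Your overall architecture---configuration LP, continuous primal--dual with prediction-dependent rescaled coefficients, $(\lambda,\mu)$-local-smoothness to certify dual feasibility, and a per-instant rate comparison for robustness---is the same as the paper's. The gap is in how you couple consistency and robustness through a \emph{single} primal vector.

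You assert that by discounting $\tilde b_{i,e}$ on predicted elements ``the update pushes every predicted coordinate up to $x_e\ge x^{\pred}_e/(1+\eta)$''. That step is not justified: the multiplicative update halts when the dual covering condition $\sum_i \tilde b_{i,e}\alpha_i \ge \tfrac{1}{\lambda}\nabla_e F$ becomes tight, and nothing in that stopping rule pins $x_e$ to any prescribed value. Even with heavy discounting of $\tilde b_{i,e}$, earlier (non-predicted) elements sharing resource $i$ may already have driven $\alpha_i$ high, and $\nabla_e F$ may be small, so the loop for a predicted $e$ can terminate with $x_e$ arbitrarily far below $1/(1+\eta)$. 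The paper sidesteps this by keeping \emph{two} vectors: an auxiliary $\vect{y}$ on which the primal--dual dynamics run (with the rescaled coefficients $\bar b_{i,e}$), and the actual output $\vect{x}$, which is \emph{set by override} to $1/(1+\eta)$ whenever $x^{\pred}_e=1$ and the prediction is still feasible, and to $y_e/(1+\eta)$ otherwise. Consistency is then immediate from the override, and primal feasibility of $\vect{x}$ is argued by splitting each resource's load into the predicted part (at most $1/(1+\eta)$, since $\vect{x}^{\pred}$ is feasible) and the remainder (at most $\eta/(1+\eta)$, since there $\bar b_{i,e}=b_{i,e}/\eta$ and $\sum_e \bar b_{i,e} y_e\le 1$).

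This two-vector design also relocates the $r(\eta)$ loss in the robustness argument. The rate comparison and weak duality yield $F(\vect{y})\ge \tfrac{\lambda}{2\ln(1+d\bar\rho)+\mu}\cdot\mathcal{O}(I)$ directly against the \emph{original} optimum, because the dual $(\vect{\alpha},\vect{\beta},\gamma)$ is feasible for the original LP (local smoothness is applied for every $S$, which is what the configuration-dual constraint demands, not just for the optimal support). The factor $r(\eta)$ then enters only when passing from $\vect{y}$ to $\vect{x}$: since $\vect{x}\ge \vect{y}/(1+\eta)$ coordinatewise, monotonicity gives $F(\vect{x})\ge F(\vect{y}/(1+\eta))\ge r(\eta)\,F(\vect{y})$. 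Your alternative---shrinking capacities and comparing to the optimum of a rescaled program---would require the dual you construct to be feasible for that rescaled LP, which you have not argued, and in any case the rate comparison you invoke tracks the primal--dual iterate, not a vector that has been partially overwritten to secure consistency; once you introduce the override needed for consistency, the clean ``$F(\vect{x})$ versus dual rate'' bookkeeping you describe no longer applies to $\vect{x}$ itself.
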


In the following, we describe the applications of our framework to different classes of problems.
\begin{itemize}
\item \textbf{Linear objectives.} 
Numerous combinatorial problems can be formalized by linear programs with packing constraints
to which our framework applies.  
When  $f$ is  a monotone linear function, the smooth parameters of $F$ are
$\lambda = 1$ and $\mu = 1$. In this case, our algorithm guarantees the consistency $C(\eta) = \frac{1}{1 + \eta}$
and the robustness $R(\eta) =  \frac{\Omega(1)}{2(1 + \eta)\ln(1+ d\rho/\eta)}$. Note that in this case, the multilinear extension $F$ is rather easy to compute, as it satisfies $F(\vect x) = \sum_{e\in\cal E} x_e f(\vect 1_{\{e\}})$.
\item
\textbf{Submodular objectives.}
Submodular maximization constitutes a major research agenda and has been widely studied 
in optimization, machine learning \cite{Bachothers13:Learning-with,KrauseGolovin14:Submodular-Function,BianLevy17:Non-monotone-Continuous} 
and algorithm design \cite{FeldmanNaor11:A-unified-continuous,BuchbinderFeldman15:A-tight-linear}.
Using our framework, we derive an algorithm that yields a $(1 - \eta)$-consistent
and $\Omega \bigl( \frac{1}{2\ln(1+ d\rho/\eta)} \bigr)$-robust fractional solution for online submodular maximization 
with packing constraints. This performance is similar to that of linear functions up to constant factors. 
Note that using the online contention resolution rounding schemes \cite{FeldmanSvensson16:Online-contention},
generalizing the offline counterpart \cite{ChekuriVondrak14:Submodular-function}, 
one can obtain randomized algorithms for several specific constraint polytopes such as knapsack polytopes, 
matching polytopes and matroid polytopes. 
\end{itemize}

\paragraph{Ad-auctions problem.} 
An interesting particular packing problem is the ad-auctions revenue maximization.  
Informally, one needs to allocate ads in an online manner to advertisers with their budget constraints 
in order to maximize the total revenue from allocated ads. Any algorithm improving the ad allocation even by a small fraction 
would have a non-negligible impact. Building on the salient ideas of our framework, we give 
a $(1-\eta)$-consistent and $(1 - \eta e^{-\eta})$-robust algorithm for this problem. 
% This is the optimal trade-off between the consistency and robustness, because 
% any $(1 - \eta)$-consistent algorithm must have robustness at most $(1 - \eta e^{-\eta})$.  

\subsection{Related work}
The primal-dual method is a powerful tool for online computation.

A primal-dual framework for linear programs with packing/covering constraints was given in \cite{BuchbinderNaor09:The-Design-of-Competitive}.
Their method unifies several previous potential-function-based analyses and give a 
principled approach to design and analyze algorithms for problems with linear relaxations. 
A framework for covering/packing problems with 
convex/concave objectives whose gradients are monotone was provided by \cite{AzarBuchbinder16:Online-Algorithms}. Subsequently, 
\cite{Thang20:Online-Primal-Dual} presented algorithms dealing with non-convex functions 
and established the competitive ratio as a function of the smoothness
parameters of the objective function. The smoothness notion introduced in \cite{Thang20:Online-Primal-Dual} has rooted in smooth games 
defined by \cite{Roughgarden15:Intrinsic-Robustness} in the context of algorithmic game theory.

% Online algorithms with predictions
The domain of algorithms with predictions \cite{MitzenmacherVassilvitskii20:Beyond-the-Worst-Case}, 
or learning augmented algorithms, has recently emerged
and rapidly grown at the intersection of (discrete) algorithm design and machine learning (ML). The idea is to incorporate learning predictions,
together with ML techniques into algorithm design, in order to achieve performance guarantees beyond 
the worst-case analysis and provide specifically adapted solutions to different problems. 
Interesting results have been shown over a large spectrum of problems such as
 scheduling \cite{LattanziLavastida20:Online-scheduling,Mitzenmacher20:Scheduling-with}, 
 caching (paging) \cite{LykourisVassilvtiskii18:Competitive-caching,Rohatgi20:Near-optimal-bounds,AntoniadisCoester20:Online-metric}, 
 ski rental \cite{GollapudiPanigrahi19:Online-algorithms,KumarPurohit18:Improving-online,AngelopoulosDurr20:Online-Computation}, 
 counting sketches \cite{HsuIndyk19:Learning-Based-Frequency}, 
 bloom filters \cite{KraskaBeutel18:The-case-for-learned,Mitzenmacher18:A-model-for-learned}, etc. 
 Recently, \citet{BamasMaggiori20:The-Primal-Dual-method} have proposed a primal-dual approach to design online algorithms with predictions for linear problems with covering constraints. In this paper, by combining their ideas and the ideas from 
 \cite{BuchbinderNaor09:The-Design-of-Competitive,AzarBuchbinder16:Online-Algorithms,Thang20:Online-Primal-Dual},
 we present a primal-dual framework for more general problems with non-linear objectives and 
 packing/covering constraints (motivated by a question mentioned in \cite{BamasMaggiori20:The-Primal-Dual-method}). 
 
 % Matching
 Online matching and ad-auctions have been widely studied (see \cite{Mehta13:Online-Matching} and references therein). 
 Motivated by Internet advertising applications, 
 several works have considered the ad-auctions problem in various settings where forecast/prediction is available/learnable. 
 \citet{EsfandiariKorula18:Allocation-with} considered a model in which the input is stochastic and a forecast for the future items is given.       
The accuracy of the forecast is intuitively measured by the fraction of the value of an optimal solution that can be obtained from the stochastic input.
They provide algorithms with provable bounds in this setting. \citet{SchildVee19:Semi-Online-Bipartite} introduced 
the semi-online model in which unknown future has a predicted part and an 
adversarial part. They gave algorithms with competitive ratios depending on the fraction of the adversarial part (in the input). Closely related to our work is the model by \citet{MahdianNazerzadeh12:Online-Optimization}
in which given two algorithms, one needs to design a (new) algorithm which is robust to both given algorithms. They derived an algorithm for ad-auctions problems
that achieves a fraction of the maximum revenue of the given algorithms. The main difference to ours model is that their algorithm is \emph{not} robust in case a given algorithm provides infeasible solutions (which would happen for the prediction) whereas our algorithm is.

%TODO: add references to online matching with prediction

%\input{general}

%!TEX root = main.tex

\section{Primal-Dual Framework for Packing Problems}		\label{sec:packing}
%\paragraph{Packing Problem.}
%Let $\mathcal{E}$ be a set of $n$ resources 
%and let $f: \{0,1\}^{n} \rightarrow \mathbb{R}^{+}$ be an \emph{arbitrary} non-decreasing function.
%Let $x_{e} \in \{0,1\}$ be a variable indicating whether resource $e$ is selected. 
%The set of packing constraints $\sum_{e} b_{i,e} x_{e} \leq 1 ~\forall i$ (including $x_{e} \leq 1 ~\forall e$) are given in advance and 
%resources $e$ are revealed online one-by-one. At the arrival of resource $e$, one receives a prediction $x^{\pred}_{e} \in \{0,1\}$ 
%and needs to make a decision on $x_{e}$ while maintaining $\vect{x} = (x_{e})_{e \in \mathcal{E}}$ feasible to the set of constraints.
%The objective of the problem is to maximize $f(\vect{x})$. In this problem, we seek a fractional solution that 
%is consistent to the prediction and robust to the optimal offline solution. 
%
%
%
%
%\subsection{Algorithm for Fractional Packing}		\label{sec:packing-main}
%Recall that a differentiable function $F: [0,1]^{n} \rightarrow \mathbb{R}^{+}$ is $(\lambda,\mu)$-max-locally-smooth
%if for any set $S \subset \mathcal{E}$, and for every vector $\vect{x} \in [0,1]^{n}$, 
%the following inequality holds:
%$$
%\sum_{e \in S} \nabla_{e} F(\vect{x}) \geq \lambda F\bigl( \vect{1}_{S} \bigr) - \mu F\bigl( \vect{x} \bigr)
%$$
%%where $\vect{x} := \bigvee_{e \in S} \vect{x}^{e}$, meaning that $x_{e'}  = \max \{x^{e}_{e'}\}$ for any coordinate $e'$.
% 

\paragraph{Formulation.}
First, we model the packing problem as a configuration linear program. 
For the integral variant of the packing problem, the decision variable $x_{e}\in\{0,1\}$  indicates whether element $e$ is selected in the solution. A configuration is a set of elements $S \subseteq \mathcal{E}$ and could be feasible or not.  In addition to $\vect x$ the linear program contains variables $z_S\in\{0,1\}$ for every configuration $S$. 
The idea is that $z_S=1$ solely for the set $S$ containing all selected elements $e$, i.e.\ for which $x_e=1$.  In this case $S$ is feasible by the constraints imposed on $\vect x$.

The fractional variant of the packing problem is modeled with the same variables and constraints, but without the integrality constraints.  Now $x_e$ specifies the fraction with which $e$ is selected.  The $\vect z$ variables represent now a distribution on configurations $S$ and have to be consistent with $\vect{x}$ in the following sense. When $S$ is selected with probability $z_S$, then $e$ belongs to the selected set $S$ with probability $x_e$. Unlike for the integral linear program, $\vect z$ is not unique for given $\vect x$.  In our algorithm we chose a particular vector $\vect z$. Note that the support of $\vect z$ might contain non-feasible configurations.

We consider the following linear program and its  dual.  

\begin{minipage}[t]{0.45\textwidth}
\begin{align*}b
\max  \sum_{S} &f(\vect{1}_{S}) z_{S} \\
\sum_{e} b_{i,e}  \cdot x_{e}  &\leq 1 & &  \forall i & (\alpha_i)\\
\sum_{S: e \in S} z_{S}  &= x_{e} 	& & \forall e & (\beta_e)\\
\sum_{S} z_{S} &= 1 & & & (\gamma) \\
x_{e} , z_{S} &\geq 0 & & \forall e,S\\
\end{align*}
\end{minipage}
\quad
\begin{minipage}[t]{0.5\textwidth}
\begin{align*}
\min \sum_{i} \alpha_{i} &+ \gamma \\
\sum_{i} b_{i,e}  \cdot \alpha_{i} &\geq \beta_{e}  & &  \forall e & (x_e)\\
\gamma + \sum_{e \in S} \beta_{e} &\geq f(\vect{1}_{S})  & & \forall S & (z_S) \\
\alpha_{i} &\geq 0 & & \forall i 
\end{align*}
\end{minipage}

In the primal, $(\alpha_i)$ are the packing constraints of the given problem.  Constraints $(\beta_e)$ force the aforementioned  relation between $\vect x$ and $\vect z$.  Constraint $(\gamma)$ ensures that $\vect z$ represents a distribution.
Note that the primal constraints $(\gamma)$ and $(\beta_e)$, imply the box constraints $x_{e}  \leq 1 ~\forall e$. 
 
\paragraph{Algorithm.} 
Assume that function $F(\cdot)$ is $(\lambda, \mu)$-locally smooth.
Let $d$ be the maximal number of positive entries in a row, i.e., $d = \max_{i} |\{b_{ie}: b_{ie} > 0\}|$.  
Let $\rho$ the maximum divergence between positive row coefficients, i.e.\ $\rho = \max_{i} \max_{e,e': b_{ie' > 0}} b_{ie} / b_{ie'}$.   
The algorithm is given a prediction $\vect{x}^\pred$, i.e.\ with every arriving element $e$, it receives the values $x^\pred_{ie}$ for each resource $i$.  It uses this prediction to specify coefficients $\overline{\vect b}$, which are scaled from $\vect b$ in a specific manner.  The maximum divergence $\overline \rho$ is defined similarly as $\rho$ with $\overline{\vect b}$ replacing $\vect b$, i.e., $\overline{\rho} = \max_{i} \max_{e,e': \overline{b}_{ie' > 0}} \overline{b}_{i,e} / \overline{b}_{ie'}$.

The algorithm maintains a primal solution $\vect y\in[0,1]^{\cal E}$, computed with the primal-dual method with respect to the coefficients $\overline {\vect b}$. Its decision for element $e$ either follows the predicted solution $x^\pred_e$, or it follows $y_{e}$ in case infeasibility of the predicted solution $\vect x^\pred$ has been detected.

The value of $\overline{b}_{i,e}$ depends on coefficient $b_{i,e}$ and 
the prediction $x^\pred_e$. Specifically, $\overline{b}_{i,e} = b_{i,e}$ if
$x^{\pred}_{e} = 1$ and the predictive solution is \emph{not} feasible; $\overline{b}_{i,e} = \frac{1}{\eta} b_{i,e}$ otherwise. In both cases, packing constraints using $\overline{\vect b}$ are stronger than they would be with coefficients $\vect b$.

Intuitively, if we do not trust the prediction at all, i.e.\ $\eta=1$, then 
$\overline{b}_{i,e} = b_{i,e}$ and therefore $x_{e}, y_{e}$ would get a value proportional to the one returned by a primal-dual algorithm
in the classic setting.
Inversely, if we trust the prediction (i.e., $\eta$ is close to 0) and the prediction is feasible, then 
$\overline{b}_{i,e} = b_{i,e}$ when $x^{\pred}_{e} = 1$ and $\overline{b}_{i,e} = b_{i,e}/\eta$ when 
$x^{\pred}_{e} = 0$. Therefore, the modified constraint
$\sum_{e'} \overline{b}_{i,e'} y_{e'} \leq 1$ will likely prevent $y_{e}$, for $e$ such that $x^{\pred}_{e} = 0$, 
from getting a large value. Hence, $y_{e}$ for $e$ such that $x^{\pred}_{e} = 1$ could get a large value.
In the end of each iteration, we set the output solution $x_{e}$ roughly by scaling a factor $\frac{1}{1+\eta}$ to  $y_{e}$ or $x^{\pred}_{e}$
(depending on cases)
in order to maintain the feasibility and the consistency to the prediction.

The construction of $\vect{y}$ follows the scheme in \cite{Thang20:Online-Primal-Dual}. 
We recall the definition of the divergence factor $\overline{\rho} = \max_{i} \max_{e,e': \overline{b}_{ie' > 0}} \overline{b}_{i,e} / \overline{b}_{ie'}$.
So in particular, $\overline{\rho} \leq \rho/\eta$. 
Recall that the gradient in direction $e$ is $\nabla_{e} F(\vect{y}) = \partial F(\vect{y})/\partial y_{e}$.
By convention, when an element $e$ is not released, $\nabla_{e} F(\vect{y}) = 0$.  
While $\nabla_{e} F(\vect{y}) > 0$ --- i.e., increasing $y_{e}$ improves the objective value ---
and $\sum_{i} \overline{b}_{i,e}  \alpha_{e} \leq \frac{1}{\lambda} \nabla_{e} F(\vect{y})$, the primal variable $y_{e} $
and dual variables $\alpha_{i}$'s are increased by appropriate rates. We will argue in the analysis that 
the primal and dual solutions returned by the algorithm are feasible. 

Recall that by definition of the multilinear extension, 
%\marginpar{Better avoid game theory notation}
$$
\nabla_{e} F(\vect{y})
= F((\vect{y}_{-e}, 1)) - F((\vect{y}_{-e},0))
= \mathbb{E}_{R} \bigl[ f\bigl(\vect{1}_{R \cup \{e\}}\bigr) - f\bigl(\vect{1}_{R}\bigr) \bigr]
$$
where $(\vect{y}_{-e}, 1)$ denotes a vector which is identical to $\vect{y}$ on every coordinate different to $e$ and 
the value at coordinate $e$ is 1.  The vector $(\vect{y}_{-e}, 0)$ is defined similarly. 
The expectation is taken over random subset $R \subseteq \mathcal{E} \setminus \{e\}$ such that $e'$ is included with probability $y_{e'}$.
Therefore, during the iteration of the while loop with respect to element $e$, only $y_{e} $ is modified and $y_{e'} $ remains fixed for all $e' \neq e$, 
as a consequence $\nabla_{e} F(\vect{y})$ is constant during the iteration.
Moreover,  for every $e$, $F(\vect{y})$ and $\nabla_{e} F(\vect{y})$ 
can be efficiently approximated up to any required precision \cite{Vondrak10:Polyhedral-techniques}.

One important aspect of the primal-dual algorithm presented below, is that it works only with primal variables $\vect x$ and dual variables $\alpha$, and uses the multilinear extension $F$ instead of $f$.  However for the analysis, we will later show how to extend these solutions with variables $\vect z$ and $\vect \beta$, $\gamma$ both to show feasibility of the constructed solution and to analyze consistency and robustness of the algorithm.  

\begin{algorithm}[ht]
\begin{algorithmic}[1]  
\STATE All primal and dual variables are initially set to 0. 
\STATE Let $\vect{y}\in[0,1]^{\cal E}$ be such that $y_{e} = 0 ~\forall e$. 
\STATE At every step, always maintain $z_{S} = \prod_{e \in S} x_{e}  \prod_{e \notin S} (1 - x_{e} )$.
\FOR{each arrival of a new element $e$} 
	\IF{$x^{\pred}_{e} = 1$ \OR the predictive solution $\vect x^\pred$ is infeasible} 
	\STATE{ set $\overline{b}_{i,e} = b_{i,e}$} 
	\ELSE \STATE{ set $\overline{b}_{i,e} = b_{i,e}/\eta$}
	\ENDIF
	\WHILE{$\sum_{i} \overline{b}_{i,e}  \alpha_{i} \leq \frac{1}{\lambda} \nabla_{e} F(\vect{y})$ \AND $\nabla_{e} F(\vect{y}) > 0$}
%		\STATE During the while loop, always maintain $\beta_{e} \gets \frac{1}{\lambda} \nabla_{e} F(\vect{y})$. The value 
%			of $\beta_{e}$ will be fixed after the for loop related to resource $e$. 
		\STATE Some of the primal and dual variables are increased continuously as follows, where $\tau$ is the time during this process.
		% Let $\tau$ be the time in the execution of the algorithm. The dual variables evolve during the execution of the algorithm as follows. 
		% \marginpar{But $y_e$ is primal}
		\STATE Increase $y_{e} $ at a rate such that $ \frac{dy_{e}}{d\tau} \gets \frac{1}{\nabla_{e} F(\vect{y}) \cdot \ln(1+ d\overline{\rho}  )}$.	
				\label{algo-packing:x}
% 		\FOR{$i$ such that $\overline{b}_{i,e}  > 0$}	
% 			\STATE Increase $\alpha_{i}$ at a rate such that
% %				\begin{align*}
% 					$
% 					\frac{d \alpha_{i}}{d \tau}	\gets \frac{\overline{b}_{i,e}  \cdot \alpha_i}{\nabla_{e} F(\vect{y})}  + \frac{1}{d \lambda}
% 					$
% %				\end{align*}
% 					\label{algo-packing:alpha}
% 		\ENDFOR
	\ENDWHILE 
	\IF{$x^{\pred}_{e} = 1$ \AND the predictive solution is still feasible} 
		\STATE set $x_{e}  \gets \frac{1}{1+\eta} x^{\pred}_{e} = \frac{1}{1+\eta}$ 
	\ELSE\STATE set $x_{e}  \gets  \frac{1}{1+\eta} y_{e}$
	\ENDIF
\ENDFOR
\end{algorithmic}
\caption{Algorithm for Packing Problem.}
	\label{algo:packing}
\end{algorithm}

Note that once the prediction becomes infeasible, the algorithm works with the given $b$ coefficients and outputs the solution computed by 
$y_e$ scaled by $1/(1+\eta)$.

\paragraph{Primal variables.}
The vector $\vect x$ is completed by $\vect z$ to form a complete solution to
the primal linear program, by setting $z_{S} = \prod_{e \in S} x_{e}  \prod_
{e \notin S} (1 - x_{e} )$.

\paragraph{Dual variables.} 
Variables $\alpha_{i}$'s  are constructed in the algorithm. 
%Let $\vect{x} = (x_{e'})_{e'}$ and 
%Let $\vect{y} = (y_{e'})_{e'}$ be the current vectors of the algorithm and 
%Let $\vect{y}^{e}$ be the vector $\vect{y}$ just after the while loop with respect to resource $e$.
Define $\gamma = \frac{\mu}{\lambda} F(\vect{y})$ and 
$\beta_{e} = \frac{1}{\lambda} \cdot \nabla_{e} F(\vect{y})$.
%During the while loop with respect to resource $e$, by the observation above (i.e., $\nabla_{e} F(\vect{y})$ is constant during the iteration), 
%we have $\beta_{e} = \frac{1}{\lambda} \cdot \nabla_{e} F(\vect{y})$. 

The following lemma provides a lower bound on the $\alpha$ variables.
%Remark that the monotonicity of the gradient is crucial in the analysis of \cite{AzarBuchbinder16:Online-Algorithms}, in particular 
%to prove the bounds on $x$-variables. However, by our approach the gradient monotonicity  
%is not needed. 

\begin{restatable}{lemma}{BoundAlpha}
\label{lem:bound-alpha}
At any moment during the iteration related to element $e$,  
for every constraint $i$
it always holds that 
%$$
%\alpha_{i}	\geq  \frac{\beta_{e}}{\max_{e'}  \overline{b}_{i,e'}  \cdot d} 
%		\left[ \exp\biggl( \ln \bigl(1+ d\overline{\rho} \bigr) 
%				\cdot \sum_{e'} \overline{b}_{i,e'}  \cdot y_{e'}  \biggr) - 1 \right].
\begin{align}
\alpha_{i}	&\geq  \frac{\nabla_{e} F(\vect{y})}{\max_{e'}  \overline{b}_{i,e'}  \cdot d \lambda} 
		\left[ \exp\biggl( \ln \bigl(1+ d\overline{\rho} \bigr) 
				\cdot \sum_{e'} \overline{b}_{i,e'}  \cdot y_{e'}  \biggr) - 1 \right]. \label{eq:inv_alpha}
\end{align}
\end{restatable}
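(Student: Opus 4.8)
The plan is to control $\alpha_i$ by a continuous argument along the while loop of the iteration for $e$, reducing the claimed inequality to a quantity that is monotone along the loop together with an essentially trivial base case.

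First I would fix $e$ and parametrise the while loop by the process time $\tau$, starting at the instant $e$ arrives (once $\overline{b}_{\cdot,e}$ has been fixed). The points to record are: during this loop only $y_e$ moves, and since $F$ is multilinear, $\nabla_{e}F(\vect{y}) = F((\vect{y}_{-e},1)) - F((\vect{y}_{-e},0))$ does not depend on $y_e$, hence is constant along the loop; similarly $\max_{e'}\overline{b}_{i,e'}$, $d$ and $\overline{\rho}$ are fixed. Writing $c := \ln(1+d\overline{\rho})$, $L_i := \sum_{e'}\overline{b}_{i,e'}y_{e'}$ and $K_i := \nabla_{e}F(\vect{y})/(\max_{e'}\overline{b}_{i,e'}\,d\lambda)$ (the last being a constant along the loop), the target inequality \eqref{eq:inv_alpha} is exactly $\alpha_i + K_i \ge K_i\, e^{c L_i}$, equivalently $\ln(\alpha_i+K_i) - cL_i \ge \ln K_i$.

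Next I would set $\psi_i := \alpha_i + K_i$ and show that $\tau \mapsto \ln\psi_i(\tau) - cL_i(\tau)$ is non-decreasing. Differentiating and inserting the rate $\tfrac{dy_e}{d\tau} = 1/(\nabla_{e}F(\vect{y})\cdot c)$ used in the algorithm, together with the rate at which the algorithm raises $\alpha_i$ — which is chosen precisely so that this cancellation occurs, namely $\tfrac{d\alpha_i}{d\tau} = \tfrac{\overline{b}_{i,e}}{\nabla_{e}F(\vect{y})}\,\psi_i$ — the two contributions match: $\tfrac{\dot\alpha_i}{\psi_i} = \tfrac{\overline{b}_{i,e}}{\nabla_{e}F(\vect{y})} = c\,\overline{b}_{i,e}\,\dot y_e = \tfrac{d}{d\tau}(cL_i)$. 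Hence $\psi_i(\tau) = \psi_i(0)\,e^{c(L_i(\tau)-L_i(0))}$, and it remains only to check $\alpha_i \ge K_i(e^{cL_i}-1)$ at $\tau=0$, i.e.\ at the moment $e$ arrives (where $y_e=0$, so $L_i$ runs over earlier elements only).

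Finally I would establish this base case by induction over the arrivals. For the first element all variables are $0$ and the inequality is $0\ge 0$. For a later element $e$: between the end of the previous iteration and the arrival of $e$ neither $\alpha_i$ nor $\vect{y}$ (hence $L_i$) changes, so one only needs that the inequality proved at the end of the previous iteration survives re-evaluating $K_i$ for the new element; here $\max_{e'}\overline{b}_{i,e'}$ can only increase when $e$ is revealed, and one invokes monotonicity of $F$ together with the definition of $\overline{\vect b}$ to bound the change of $\nabla_{e}F(\vect{y})$ against the gradient of the previously processed element. I expect this inter-iteration transfer — guaranteeing the bound is not lost when the ``constant'' $K_i$ is refreshed at the start of each new iteration — to be the delicate point and the main obstacle; the differential computation inside a single iteration, and the observation that $\nabla_{e}F(\vect{y})$ is frozen there, are the routine parts.
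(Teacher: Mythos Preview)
Your within-iteration argument is essentially the paper's, just packaged as the constancy of $\ln(\alpha_i+K_i)-cL_i$ rather than as a direct derivative comparison. One technical slip: the rate you attribute to $\alpha_i$ is not the one the algorithm actually uses. The paper's update is $\tfrac{d\alpha_i}{d\tau}=\tfrac{\overline{b}_{i,e}\,\alpha_i}{\nabla_eF(\vect y)}+\tfrac{1}{d\lambda}$, whereas you write $\tfrac{\overline{b}_{i,e}}{\nabla_eF(\vect y)}\psi_i=\tfrac{\overline{b}_{i,e}\,\alpha_i}{\nabla_eF(\vect y)}+\tfrac{\overline{b}_{i,e}}{\max_{e'}\overline{b}_{i,e'}\,d\lambda}$. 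These agree only when $\overline{b}_{i,e}=\max_{e'}\overline{b}_{i,e'}$; in general the paper's rate is larger, so your monotonicity conclusion survives, but the ``exact cancellation'' you announce is not exact.

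The genuine gap is the inter-iteration transfer you flag at the end. You need, at the start of the loop for $e$, that $\alpha_i\ge K_i^{(e)}(e^{cL_i}-1)$ with $K_i^{(e)}=\nabla_eF(\vect y)/(\max_{e'}\overline{b}_{i,e'}\,d\lambda)$, knowing only the corresponding inequality with $K_i^{(e')}$ for the previously processed element $e'$. Since the bracket $e^{cL_i}-1$ is nonnegative, this reduces to $\nabla_eF(\vect y)\le\nabla_{e'}F(\vect y)$ (the denominator can only grow, as you note). But monotonicity of $F$ gives you $\nabla_eF\ge0$, nothing more; it does \emph{not} compare partial derivatives across distinct coordinates, and indeed for a linear $F(\vect y)=\sum_e w_e y_e$ one has $\nabla_eF=w_e$, which can be arbitrarily large relative to $w_{e'}$. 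So the argument you sketch for ``refreshing $K_i$'' cannot work as stated. The paper's own proof simply writes ``assume that the inequality holds at the beginning of the iteration step'' and proceeds with the derivative comparison, so it does not give you a mechanism to borrow here either; you have correctly located the delicate point, but the monotonicity-based fix is not one.
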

\begin{proof}
Fix a constraint $i$. We prove the claimed inequality by induction.
In the very beginning of the algorithm, when no elements are released yet, the inequality holds since both sides are 0.  
Consider any moment $\tau$ during the loop corresponding to an arriving element $e$.
Assume that the inequality holds at the beginning of the iteration step. We will show that the inequality still holds after the step.

As $F$ is a multilinear extension, by its very definition, $F$ is linear in $y_{e}$. Hence 
$\nabla_{e} F(\vect{y})$ is independent of $y_{e} $. 
Moreover, during the loop corresponding to resource $e$, only $y_{e} $ is modified, leaving  $y_{e'} $ unchanged for all $e' \neq e$.  
As as a result, 
$d \nabla_{e} F(\vect{y})/d \tau = 0$. Therefore, the rate at time $\tau$ of the the right-hand-side of Inequality~\eqref{eq:inv_alpha} is:

\begin{align*}
&\frac{\nabla_{e} F(\vect{y})}{\max_{e'}  \overline{b}_{i,e'}  \cdot d \lambda } \cdot \ln \bigl(1+ d\overline{\rho} \bigr) \cdot \overline{b}_{i,e}  \cdot
		 \frac{d y_{e} }{d \tau} \cdot \exp \biggl( \ln \bigl(1+ d\overline{\rho} \bigr) \cdot \sum_{e'} \overline{b}_{i,e'}  \cdot y_{e'}  \biggr) \\
&\leq  
\frac{\nabla_{e} F(\vect{y})}{\max_{e'}  \overline{b}_{i,e'}  \cdot d \lambda} \cdot \ln \bigl(1+ d\overline{\rho} \bigr) \cdot \overline{b}_{i,e}  \cdot
		 \frac{1}{\nabla_{e} F(\vect{y}) \cdot \ln(1+ d\overline{\rho} )}
		 \cdot \biggl( \frac{\max_{e'} \overline{b}_{i,e'}  \cdot d\lambda \cdot \alpha_{i}}{ \nabla_{e} F(\vect{y}) } + 1 \biggr) \\
&\leq \frac{\overline{b}_{i,e}  \cdot \alpha_{i}}{\nabla_{e} F(\vect{y})}  + \frac{1}{d \lambda} = \frac{d\alpha_{i}}{d \tau},
\end{align*}
where in the first inequality we use the induction hypothesis and the increasing rate of $y_{e}$.
So at any time during the iteration related to $e$, the increasing rate of the left-hand side is always larger than that of the right-hand side. 
Hence, the lemma follows.
\end{proof}

\begin{lemma}		\label{lem:packing-primal-feasible}
The primal variables constructed by the algorithm are feasible. 
\end{lemma}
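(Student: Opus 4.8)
The plan is to check, one family at a time, that the pair $(\vect{x},\vect{z})$ produced by Algorithm~\ref{algo:packing} satisfies every constraint of the primal configuration LP: nonnegativity $x_{e}\ge 0$, the coupling constraints $(\beta_{e})$, the normalization $(\gamma)$, and --- the only substantial ones --- the packing constraints $(\alpha_{i})$, $\sum_{e} b_{i,e}x_{e}\le 1$. Everything except $(\alpha_{i})$ is immediate: since $\vect{z}$ is the product distribution $z_{S}=\prod_{e\in S}x_{e}\prod_{e\notin S}(1-x_{e})$, as soon as $x_{e}\in[0,1]$ for every $e$ we get $z_{S}\ge 0$, $\sum_{S} z_{S}=\prod_{e}\bigl(x_{e}+(1-x_{e})\bigr)=1$ which is $(\gamma)$, and $\sum_{S:e\in S}z_{S}=x_{e}$ which is $(\beta_{e})$ by marginalization. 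And $x_{e}\in[0,1]$ because the algorithm sets $x_{e}$ to $\frac{1}{1+\eta}x^{\pred}_{e}$ with $x^{\pred}_{e}\in\{0,1\}$, or to $\frac{1}{1+\eta}y_{e}$ with $y_{e}\in[0,1]$ (the algorithm maintains $\vect{y}\in[0,1]^{\cal E}$), so $x_{e}\in[0,\frac{1}{1+\eta}]$.

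For the packing constraints I would proceed in two steps. Step one is the invariant that for every resource $i$, at every moment of the execution,
\[
\sum_{e} \overline{b}_{i,e}\,y_{e}\ \le\ 1 .
\]
I would prove this by induction along the continuous process: the quantity is $0$ initially and changes only inside the \textsc{while} loop of some element $e$, and then only if $\overline{b}_{i,e}>0$. While that loop runs, the loop guard gives $\overline{b}_{i,e}\alpha_{i}\le\sum_{j}\overline{b}_{j,e}\alpha_{j}\le\frac{1}{\lambda}\nabla_{e}F(\vect{y})$; combining this with the lower bound on $\alpha_{i}$ from Lemma~\ref{lem:bound-alpha}, with $\nabla_{e}F(\vect{y})>0$ (valid throughout the loop), and with $\max_{e'}\overline{b}_{i,e'}/\overline{b}_{i,e}\le\overline{\rho}$, yields $(1+d\overline{\rho})^{\sum_{e'}\overline{b}_{i,e'}y_{e'}}\le 1+d\overline{\rho}$, hence $\sum_{e'}\overline{b}_{i,e'}y_{e'}\le 1$; so the sum cannot cross $1$. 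Step two turns this into $(\alpha_{i})$. Since $0<\eta\le 1$ we always have $\overline{b}_{i,e}\ge b_{i,e}$, so already $\sum_{e} b_{i,e}y_{e}\le 1$. I would then split the elements by how $x_{e}$ was set: the set $T$ of elements committed to the prediction (which have $x^{\pred}_{e}=1$ and arrived before any infeasibility of $\vect{x}^{\pred}$ was flagged, so the observed prefix of $\vect{x}^{\pred}$ is feasible and $\sum_{e\in T}b_{i,e}=\sum_{e\in T}b_{i,e}x^{\pred}_{e}\le 1$); the elements with $x^{\pred}_{e}=0$ still in the trust phase, whose coefficients were inflated to $b_{i,e}/\eta$ so their output mass on resource $i$ equals $\eta\sum\overline{b}_{i,e}y_{e}$; and the elements arriving after infeasibility is detected, which run the classical primal-dual step. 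Feeding these into the invariant, together with the global factor $\frac{1}{1+\eta}$ and $\eta\le 1$, should give $\sum_{e} b_{i,e}x_{e}\le 1$.

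The hard part is precisely this second step. Treated naively it fails: the committed predicted elements can occupy up to $\frac{1}{1+\eta}$ of a resource, while the invariant separately caps the primal-dual mass by essentially the whole unit, so the sum looks like $\frac{2}{1+\eta}$ rather than $1$. The reason the bound is in fact tight is that the two contributions are coupled --- a predicted element $e$ enters the dual variables $\alpha_{i}$ through its coefficient $\overline{b}_{i,e}$ in proportion to the value $\frac{1}{1+\eta}$ it is given, so the $\overline{b}_{i,e}y_{e}$-mass it contributes to the invariant already pays for its committed output, and once infeasibility is detected the algorithm reverts to the true coefficients $\vect{b}$ so the later elements only compete for the residual capacity. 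Concretely I expect the cleanest route is to establish the coordinatewise domination $x_{e}\le\frac{1}{1+\eta}y_{e}$ for every $e$ --- reading $y_{e}$ of a committed predicted element as its predicted value $x^{\pred}_{e}$ inside the quantity $\sum_{e}\overline{b}_{i,e}y_{e}$ --- after which $\sum_{e} b_{i,e}x_{e}\le\frac{1}{1+\eta}\sum_{e}\overline{b}_{i,e}y_{e}\le\frac{1}{1+\eta}\le 1$ follows. A secondary, routine obstacle is the bookkeeping in the invariant's induction across the two transition points --- a new element arriving ($\nabla_{e}F$ changes while $y_{e}=0$) and infeasibility being detected (the $\overline{b}_{i,\cdot}$ change) --- which I would handle by checking that Lemma~\ref{lem:bound-alpha} survives each transition.
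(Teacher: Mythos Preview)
Your treatment of the easy constraints $(\beta_e)$, $(\gamma)$ and nonnegativity is fine, and Step~1 --- the invariant $\sum_{e}\overline{b}_{i,e}y_{e}\le 1$ via Lemma~\ref{lem:bound-alpha} --- is exactly what the paper does. The difficulty you flag in Step~2 is real, but your proposed resolution does not work.

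The coordinatewise domination $x_{e}\le \frac{1}{1+\eta}y_{e}$ is simply false for the committed predicted elements $e\in T$: there the algorithm sets $x_{e}=\frac{1}{1+\eta}$, while $y_{e}$ is produced by the \textsc{while} loop and can be arbitrarily small (it stops as soon as $\sum_{i}\overline{b}_{i,e}\alpha_{i}$ exceeds $\frac{1}{\lambda}\nabla_{e}F(\vect{y})$, which may already hold at $y_{e}=0$). Your suggestion to ``read $y_{e}$ of a committed predicted element as its predicted value $x^{\pred}_{e}$'' has no basis in the algorithm: $\vect{y}$ is updated independently of $\vect{x}^{\pred}$, and the invariant $\sum_{e}\overline{b}_{i,e}y_{e}\le 1$ is proved for the actual $\vect{y}$, not for a version with some coordinates overwritten by $1$. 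Likewise, the claim that a committed element ``enters the dual variables $\alpha_{i}$ in proportion to the value $\frac{1}{1+\eta}$ it is given'' is incorrect --- the $\alpha$-updates in the loop see only $\vect{y}$, never $\vect{x}$.

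The paper's argument does not try to dominate $x_{e}$ by $\frac{1}{1+\eta}y_{e}$. Instead it keeps the two parts of your split genuinely separate and bounds each one on its own. For $e\in T$ it uses only that the observed prefix of $\vect{x}^{\pred}$ is feasible, hence $\sum_{e\in T}b_{i,e}x_{e}=\frac{1}{1+\eta}\sum_{e\in T}b_{i,e}x^{\pred}_{e}\le\frac{1}{1+\eta}$, with no appeal to $\vect{y}$ at all. For every element $e\notin T$ the algorithm has set $x_{e}=\frac{1}{1+\eta}y_{e}$ \emph{and} (by the definition of $\overline{b}$) one has $b_{i,e}\le \eta\,\overline{b}_{i,e}$; hence $b_{i,e}x_{e}\le \frac{\eta}{1+\eta}\overline{b}_{i,e}y_{e}$, and summing and applying the invariant gives at most $\frac{\eta}{1+\eta}$. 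The two pieces add up to $1$. You already had both ingredients in hand --- the prefix feasibility bound and the $\eta$-inflation for $x^{\pred}_{e}=0$ in the trust phase --- but you did not carry the $\eta$-inflation through to the post-infeasibility elements and then tried to patch the gap with an unsupported domination claim. The paper's proof relies on $b_{i,e}\le \eta\,\overline{b}_{i,e}$ holding for \emph{all} $e\notin T$; that is the step you are missing.
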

\begin{proof}
%We prove the primal feasibility. 
First observe that if during the execution of the algorithm in the iteration related to some element $e$,  we have
$\sum_{e'} \overline{b}_{i,e'}  y_{e'}  > 1$ for some constraint $i$ then by Lemma~\ref{lem:bound-alpha},
$$
\alpha_{i} 
>  \frac{\nabla_{e} F(\vect{y})}{\max_{e'}  \overline{b}_{i,e'}  \cdot d\lambda } 
		\left[ \exp\biggl( \ln \bigl(1+ d\overline{\rho} \bigr)  \biggr) - 1 \right]
= \frac{\overline{\rho} \cdot \nabla_{e} F(\vect{y})}{\lambda \max_{e'}  \overline{b}_{i,e'} }
\geq \frac{\nabla_{e} F(\vect{y})}{\lambda \overline{b}_{i,e} }
$$
Therefore, $\sum_{i} \overline{b}_{i,e}  \alpha_{i} > \frac{1}{\lambda} \nabla_{e} F(\vect{y})$ and hence the algorithm would have 
stopped increasing $y_{e} $ at some earlier point. 
Therefore, every constraint $\sum_{e'} \overline{b}_{i,e'}  y_{e'}  \leq 1$ is always maintained during the execution of the algorithm. 
%(Note that by definition $\overline{b}_{i,e'} \geq b_{i,e'} \geq 0$ thus it also holds that $\sum_{i} b_{i,e'}  y_{e'}  \leq 1$.)

Secondly, we show primal feasibility (even when the prediction oracle provides an infeasible solution).
If the prediction oracle provides a feasible solution, we set $S_1=\{e: x^\pred_e=1\}$ and $S_2=\emptyset$.
Otherwise, let $e^{*}$ be the the first element for which the prediction oracle provides an infeasible solution. 
Let $S_{1}$ be the set of all resources $e$ such that $x^{\pred}_{e} = 1$ and $e$ is released before $e^{*}$.
Further let
$S_{2} = \{e: x^{\pred}_{e} = 1\} \setminus S_{1}$.
For every constraint $i$, 
\begin{align*}
\sum_{e} b_{i,e} x_{e} 
&= \sum_{e: x^{\pred}_{e} = 1} b_{i,e} x_{e} +  \sum_{e: x^{\pred}_{e} = 0} b_{i,e} x_{e} \\
&= \sum_{e \in S_{1}}  b_{i,e} x_{e} +  \sum_{e \in S_{2}} b_{i,e} x_{e}  +   \sum_{e: x^{\pred}_{e} = 0} b_{i,e} x_{e} \\
&\leq \frac{1}{1+\eta} \cdot 1
+ \eta \cdot \sum_{e \in S_{2}} \overline{b}_{i,e} x_{e} 
+ \eta \cdot \sum_{e: x^{\pred}_{e} = 0} \overline{b}_{i,e} x_{e} \\
&= \frac{1}{1+\eta}  
+ \frac{\eta}{1+\eta}  \sum_{e \in S_{2}} \overline{b}_{i,e} y_{e} 
+ \frac{\eta}{1+\eta}  \sum_{e: x^{\pred}_{e} = 0} \overline{b}_{i,e} y_{e} \\
&\leq \frac{1}{1+\eta}
+ \frac{\eta}{1+\eta}  \sum_{e} \overline{b}_{i,e} y_{e} \\
&\leq \frac{1}{1+\eta}
+ \frac{\eta}{1+\eta}  \cdot 1 = 1.
\end{align*}
The first inequality is due to: (1) the feasibility of the prediction restricted on $S_{1}$, i.e., 
$\sum_{e \in S_{1}} b_{i,e} x^{\pred}_{e} \leq 1$; and (2) 
$x_{e} = \frac{1}{1 + \eta} = \frac{1}{1 + \eta} x^{\pred}_{e}$
for $e \in S_{1}$; and (3) the definitions of $\overline{b}_{i,e}$ in Algorithm~\ref{algo:packing}.
The third equality follows by the algorithm: 
$x_{e} = \frac{1}{1 + \eta} \cdot y_{e}$
for $e \notin S_{1}$. 
The last inequality holds since $\sum_{e} \overline{b}_{i,e} y_{e} \leq 1$ by the observation made at the beginning of the lemma.
Hence, $\sum_{e} b_{i,e} x_{e} \leq 1$ for every constraint $i$.

Besides, by definition of $z_{S} = \prod_{e \in S} x_{e} \prod_{e \notin S} (1 - x_{e})$
where $0 \leq x_{e} \leq 1$ for all $e$, the identity $\sum_{S} z_{S} = 1$ always holds. 
In fact, if one chooses an element $e$ with probability $x_{e}$ then $z_{S}$ is the 
probability that the set of selected elements is $S$. So the total probability $\sum_{S} z_{S}$ must be 1.
Similarly, 
$\sum_{S: e \in S} z_{S} = x_{e} \sum_{S' \subset E \setminus \{e\}} \prod_{e' \in S'} x_{e'} \prod_{e' \notin S'} (1 - x_{e'}) = x_{e}$
since $\sum_{S' \subset E \setminus \{e\}} \prod_{e' \in S'} x_{e'} \prod_{e' \notin S'} (1 - x_{e'}) = 1$ (by the same argument). 

Therefore, the solution $(\vect{x}, \vect{z})$ is primal feasible.
\end{proof}

\begin{lemma}
The dual variables defined as above are feasible. 
\end{lemma}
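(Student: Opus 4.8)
The plan is to check the three families of dual constraints in turn. Non-negativity of the $\alpha_i$ is immediate: the algorithm only ever increases each $\alpha_i$, starting from $0$, so $\alpha_i \geq 0$ holds at every moment.

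For the configuration constraints $\gamma + \sum_{e\in S}\beta_e \geq f(\vect{1}_S)$, I would substitute the definitions $\gamma = \frac{\mu}{\lambda}F(\vect{y})$ and $\beta_e = \frac{1}{\lambda}\nabla_e F(\vect{y})$, so that the left-hand side equals $\frac{1}{\lambda}\bigl(\mu F(\vect{y}) + \sum_{e\in S}\nabla_e F(\vect{y})\bigr) = \frac{1}{\lambda}\bigl(\mu F(\vect{y}) + \langle \nabla F(\vect{y}),\vect{1}_S\rangle\bigr)$, and then invoke $(\lambda,\mu)$-local smoothness (\cref{def:max-local-smooth}) in the form $\langle\nabla F(\vect{y}),\vect{1}_S\rangle \geq \lambda F(\vect{1}_S)-\mu F(\vect{y})$. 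The two $\mu F(\vect{y})$ terms cancel and the left-hand side is at least $F(\vect{1}_S)=f(\vect{1}_S)$. This step is essentially what the local-smoothness notion was designed for, and it uses nothing about the prediction or about the modified coefficients $\overline{\vect{b}}$.

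For the element constraints $\sum_i b_{i,e}\alpha_i \geq \beta_e = \frac{1}{\lambda}\nabla_e F(\vect{y})$, I would rely on the stopping condition of the while loop handling element $e$. When that loop ends, either $\nabla_e F(\vect{y})=0$, in which case $\beta_e=0$ and there is nothing to prove, or the loop ended precisely because $\sum_i \overline{b}_{i,e}\alpha_i > \frac{1}{\lambda}\nabla_e F(\vect{y})$ --- the same observation already used in the proof of \cref{lem:packing-primal-feasible}. To carry this snapshot inequality over to the final solution one needs two monotonicity facts: the $\alpha_i$ never decrease afterwards, so the left-hand side can only grow; and $\nabla_e F(\vect{y})$ never increases when later elements are processed, since $F$ is the multilinear extension of a monotone function and in the cases of interest (linear $f$, or monotone submodular $f$) the mixed second derivatives are non-positive, so $\nabla_e F(\vect{y})$ is non-increasing in every coordinate $y_{e'}$ with $e'\neq e$ (for linear $f$ it is constant). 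Finally one reconciles $\overline{\vect{b}}$ with $\vect{b}$: because $\overline{b}_{i,e}\in\{b_{i,e},\,b_{i,e}/\eta\}$ with $\eta\leq 1$, we have $b_{i,e}\leq\overline{b}_{i,e}\leq b_{i,e}/\eta$, so the established inequality is exactly feasibility of the dual taken with respect to the effective coefficients $\overline{\vect{b}}$, which is the version consumed by the competitive analysis (equivalently, scaling every $\alpha_i$ up by $1/\eta$ gives feasibility against the original $b_{i,e}$).

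The routine parts are the non-negativity and the configuration constraints; the only delicate point is the element constraints, where the difficulty is bookkeeping rather than computation: the loop-termination inequality is phrased with the modified coefficients $\overline{\vect{b}}$ and is guaranteed only at the intermediate time at which element $e$ is processed, so one has to (a) match $\overline{\vect{b}}$ with the coefficients appearing in the dual constraint, and (b) argue that this intermediate inequality persists to the end of the run, which is where the monotonicity of $\alpha$ upward and of $\nabla_e F(\vect{y})$ downward is used.
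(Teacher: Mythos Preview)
Your proof follows the paper's two-line argument: the configuration constraints reduce to the $(\lambda,\mu)$-local-smoothness inequality after substituting $\gamma$ and $\beta_e$, and the element constraints are read off from the termination condition of the while loop together with the definition of $\beta_e$. That is the whole of the paper's proof.

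You go further on two bookkeeping points the paper leaves implicit. The persistence question --- that $\nabla_e F(\vect{y})$ should not increase when later coordinates $y_{e'}$ are raised --- is indeed needed if $\beta_e$ is evaluated at the final $\vect{y}$; it follows from non-positivity of the mixed second partials in the linear and submodular applications, as you observe, but is not part of the general $(\lambda,\mu)$-locally-smooth assumption. On the $\overline{b}$/$b$ mismatch, however, your resolution does not quite land: termination of the loop yields $\sum_i \overline{b}_{i,e}\alpha_i \geq \beta_e$ with $\overline{b}_{i,e}\geq b_{i,e}$, which is \emph{weaker} than the stated dual constraint $\sum_i b_{i,e}\alpha_i \geq \beta_e$. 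Saying this weaker inequality is ``the version consumed by the competitive analysis'' is not accurate, since weak duality against the original primal (needed for $D\geq \mathcal{O}(I)$) requires the constraint in the original $b$, and scaling every $\alpha_i$ by $1/\eta$ would restore feasibility only at the cost of inflating the dual objective by the same factor. So on this point you have correctly identified a subtlety that the paper's own proof glosses over, rather than closed it.
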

\begin{proof}
The first dual constraint $\sum_{i} b_{i,e} \alpha_{i} \geq \beta_{e}$ is satisfied by the while loop condition of the algorithm
and the definition of $\beta_{e}$.
The second dual constraint $\gamma + \sum_{e \in S} \beta_{e} \geq f(\vect{1}_{S})$ reads
\begin{align*}
	\frac{1}{\lambda} \sum_{e \in S} \nabla_{e} F(\vect{y}) + \frac{\mu}{\lambda} F(\vect{y}) \geq F(\vect{1}_{S}), 
\end{align*}
which is, by arranging terms, exactly the $(\lambda, \mu)$-max-local smoothness of $F$. 
(Recall that $F(\vect{1}_{S}) = f(\vect{1}_{S})$.)
Hence, the lemma follows.
\end{proof}

%We are now ready to prove the main theorem. 

%%
%\begin{theorem}	\label{thm:packing}
%Assume that the multilinear extension is $(\lambda, \mu)$-max-locally-smooth.
%Then, the algorithm is $\Omega\bigl( \frac{\lambda}{2\ln(1+ d\rho/\eta ) + \mu} \bigr)$-competitive
%where ...
%\end{theorem}

\Packing*
\begin{proof}
\paragraph{Robustness.}
First, we bound the increases of $F(\vect{y})$ and of the dual objective value --- which we denote by $D$ --- at any time $\tau$ in the execution of 
Algorithm~\ref{algo:packing}.
The derivative of $F(\vect{y})$ with respect to $\tau$ is:
\begin{align*} %	\label{eq:packing-primal}
\nabla_{e} F(\vect{y}) \cdot \frac{d y_{e} }{ d \tau}
= \nabla_{e} F(\vect{y}) \cdot\frac{1}{\nabla_{e} F(\vect{y}) \cdot \ln(1+ d\overline{\rho})}
= \frac{1}{\ln(1+ d\overline{\rho})}
\end{align*}
Besides, the rate of the dual at time $\tau$ is:
\begin{align*}
\frac{d D}{d \tau} 
&=  \sum_{i} \frac{d \alpha_{i}}{d \tau} + \frac{d \gamma}{d \tau}  
= \sum_{i: \overline{b}_{i,e}  > 0} \biggl( \frac{\overline{b}_{i,e}  \cdot \alpha_{i}}{\nabla_{e} F(\vect{y})}  + \frac{1}{d\lambda} \biggr) 
		+ \frac{\mu}{\lambda} \frac{d F(\vect{y})}{d \tau}  \\
&= \sum_{i: \overline{b}_{i,e}  > 0} \frac{\overline{b}_{i,e}  \cdot \alpha_{i}}{\nabla_{e} F(\vect{y})}  +  \sum_{i: \overline{b}_{i,e}  > 0} \frac{1}{d\lambda}
		+ \frac{\mu}{\lambda} \cdot \frac{1}{\ln(1+ d\overline{\rho})} \\
&\leq \frac{2}{\lambda} + \frac{\mu}{\lambda \cdot \ln(1+ d\overline{\rho})}
=  \frac{2\ln(1+ d\overline{\rho}) + \mu}{\lambda \cdot \ln(1+ d\overline{\rho})},
\end{align*}
where the inequality holds since during the algorithm
$\sum_{i} \overline{b}_{i,e}  \cdot \alpha_{i} \leq  \frac{1}{\lambda} \nabla_{e} F(\vect{y})$.
Hence, the ratio between $F(\vect{y})$ and the dual $D$ is at least $\frac{\lambda}{2\ln(1+ d\overline{\rho}) + \mu}$.

Besides, $x_{e} = \frac{1}{1 + \eta} \geq \frac{1}{1 + \eta} y_{e}$ if $x^{\pred}_{e} = 1$ and the predictive solution is still feasible; and
 $x_{e} = \frac{1}{1 + \eta} y_{e}$ otherwise. Therefore, 
 $\vect{x} \geq \frac{\vect{y}}{1 + \eta} $ and so 
 $F(\vect{x}) \geq F\bigl(\frac{\vect{y}}{1 + \eta}\bigr)$ by 
 monotonicity\footnote{Note that this is the only step in the analysis we use the monotonicity of $f$, which implies the monotonicity of $F$.} 
 of $F$. 
 Hence the robustness is at least 
$$
\frac{F(\vect{x})}{F(\vect{y})} \cdot \frac{\lambda}{2\ln(1+ d\overline{\rho}) + \mu}
\geq \min_{\vect{0} \leq \vect{u} \leq \vect{1}} \frac{F(\frac{1}{1 + \eta} \vect{u})}{F(\vect{u})} \cdot \frac{\lambda}{2\ln(1+ d\rho/\eta ) + \mu} 
$$ 
where the latter is due to $\overline{\rho} \leq \rho/\eta$.

\paragraph{Consistency.} By our algorithm, for every element $e$, if $x^{\pred}_{e} = 1$ (and the prediction 
oracle provides a feasible solution) 
then $x_{e} = \frac{1}{1+\eta}$. Hence, the consistency of the algorithm 
$F(\vect{x})/F(\vect{x}^{\pred}) \geq F(\frac{\vect{x}^{\pred}}{1 + \eta})/F(\vect{x}^{\pred}) \geq r(\eta)$.
\end{proof}

%Note that the competitive ratio is 
%the same up to a constant factor as the performance guarantee for maximizing a linear function
%under packing constraints. Specifically, if function $f$ is linear then the smooth parameters are
%$\lambda = \mu = 1$.
%
%\paragraph{Remark.} One can define $\overline{b}_{e} = b_{e}/(\frac{1}{1+\eta})$ instead of 
%$\overline{b}_{e} = b_{e} (1 + \eta)$ but be careful when $\eta = 1$. One can assume that 
%$\eta \in (0,1)$.
\subsection{Applications}

\subsubsection{Applications to linear functions}
When the objective $f$ can be expressed as a monotone linear functions, its multilinear extension $F$ 
is $(1,1)$-locally-smooth. Moreover, $r(\eta) = 1/(1+\eta)$. Consequently, 
Algorithm \ref{algo:packing} provides a $1/(1 + \eta)$-consistent
and $O\bigl(1/\ln(1+ d \rho/\eta)\bigr)$-robust fractional solution for the
online linear packing problem.

\subsubsection{Applications to online submodular maximization}	\label{sec:sub-max}
Consider the online problem of maximizing a monotone submodular function subject to packing constraints. 
A set-function $f: 2^{\mathcal{E}} \rightarrow \mathbb{R}+$ is \emph{submodular} if
$f(S \cup e) - f(S) \geq f(T \cup e) - f(T)$ for all $S \subset T \subseteq \mathcal{E}$. 
Let $F$ be the multilinear extension of a monotone submodular function $f$. Function $F$
admits several useful properties: (i) if $f$ is monotone then so is $F$; (ii) $F$ is concave in any
positive direction, i.e., $\nabla F(\vect{x}) \geq \nabla F(\vect{y})$ for all $\vect{x} \leq \vect{y}$
($\vect{x} \leq \vect{y}$ means $x_{e} \leq y_{e} ~\forall e$). 

\begin{lemma}	\label{lem:sub-max-locally-smooth}
Let $f$ be an arbitrary monotone submodular function. Then, its multilinear extension 
$F$ is (1,1)-locally-smooth.
\end{lemma}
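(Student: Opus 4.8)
The plan is to verify the defining inequality of Definition~\ref{def:max-local-smooth} directly, namely that for every $S \subseteq \mathcal{E}$ and every $\vect{x} \in [0,1]^{|\mathcal{E}|}$,
\[
\sum_{e \in S} \nabla_e F(\vect{x}) \geq F(\vect{1}_S) - F(\vect{x}).
\]
First I would recall the probabilistic reading of the multilinear extension: $F(\vect{x}) = \mathbb{E}[f(\vect{1}_T)]$ where $T$ contains each $e$ independently with probability $x_e$, and the gradient formula $\nabla_e F(\vect{x}) = \mathbb{E}_R[f(\vect{1}_{R \cup \{e\}}) - f(\vect{1}_R)]$, where $R$ is the random subset of $\mathcal{E} \setminus \{e\}$ with the same marginals. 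The key tool is property (ii) stated just before the lemma: for monotone submodular $f$, the function $F$ is concave along any nonnegative direction, equivalently $\nabla F$ is coordinatewise monotone decreasing, i.e.\ $\nabla_e F(\vect{y}) \leq \nabla_e F(\vect{x})$ whenever $\vect{x} \leq \vect{y}$.

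The main step is a telescoping/path argument. Consider moving from $\vect{x}$ to $\vect{x} \vee \vect{1}_S$ (coordinatewise maximum) by raising the coordinates in $S$ one at a time up to $1$. Along this monotone path each coordinate only increases, so by concavity in positive directions each partial derivative $\nabla_e F$ encountered along the path is at most its value at the starting point $\vect{x}$, hence
\[
F(\vect{x} \vee \vect{1}_S) - F(\vect{x}) = \sum_{e \in S} \int_{x_e}^{1} \nabla_e F(\cdot)\, dt \leq \sum_{e \in S} (1 - x_e)\, \nabla_e F(\vect{x}) \leq \sum_{e \in S} \nabla_e F(\vect{x}),
\]
using also that $\nabla_e F(\vect{x}) \geq 0$ by monotonicity of $f$. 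It then remains to note $F(\vect{x} \vee \vect{1}_S) \geq F(\vect{1}_S)$ by monotonicity of $F$, which closes the chain and gives $\sum_{e\in S}\nabla_e F(\vect x) \geq F(\vect 1_S) - F(\vect x)$, i.e.\ $\lambda = \mu = 1$ works.

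The one place requiring care is making the path/integral argument rigorous: one should either integrate coordinate by coordinate (so that at each stage only one coordinate varies and $F$ is affine in it, making $\nabla_e F$ constant during that leg) while invoking concavity only to compare $\nabla_e F$ across legs, or phrase it via a discrete telescoping sum $F(\vect{x} \vee \vect{1}_{S_k}) - F(\vect{x} \vee \vect{1}_{S_{k-1}})$ over an enumeration $e_1,\dots,e_{|S|}$ of $S$ and bound each term by $\nabla_{e_k} F$ evaluated at the smaller point, then by $\nabla_{e_k} F(\vect{x})$. I expect the main obstacle to be nothing deep — it is just bookkeeping with the direction of the concavity inequality — but it is worth stating explicitly that we use (a) monotonicity of $f$ for nonnegativity of the gradient and for $F(\vect{x}\vee\vect{1}_S)\ge F(\vect{1}_S)$, and (b) positive-direction concavity of $F$ for the gradient comparison along the path; both are exactly the properties (i) and (ii) recalled above.
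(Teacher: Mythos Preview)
Your proof is correct and is essentially the same telescoping argument as the paper's, just carried out at the level of $F$ rather than inside the expectation. The paper fixes a random set $R$ drawn according to $\vect{x}$, uses submodularity of $f$ to bound $\sum_{e\in S}\bigl[f(\vect{1}_{R\cup\{e\}})-f(\vect{1}_R)\bigr]$ below by the telescoping sum $f(\vect{1}_{R\cup S})-f(\vect{1}_R)$, and then takes expectations and invokes monotonicity; you instead invoke property~(ii) (concavity of $F$ in positive directions, which is exactly the expected form of the submodularity inequality) to compare $\nabla_{e_k}F(\vect{x}^{(k-1)})$ with $\nabla_{e_k}F(\vect{x})$ along the path $\vect{x}\to\vect{x}\vee\vect{1}_S$, and then use monotonicity of $F$ for the final step $F(\vect{x}\vee\vect{1}_S)\ge F(\vect{1}_S)$. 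The two packagings are interchangeable; your version has the small advantage that it quotes properties~(i) and~(ii) as black boxes without reopening the probabilistic representation, while the paper's version makes the role of submodularity of $f$ itself more explicit.
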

\begin{proof}
As $F$ is the linear extension of a submodular function, 
$\nabla_{e} F(\vect{x}) = \mathbb{E}_{R}\bigl[ f\bigl(\vect{1}_{R \cup \{e\}}\bigr) - f\bigl(\vect{1}_{R}\bigr) \bigr]$
where $R$ is a random subset of $\mathcal{E} \setminus \{e\}$ such that $e'$ is included with probability $x_{e'} $.
For any subset $S = \{e_{1}, \ldots, e_{\ell}\}$, we have
\begin{align*}
F(\vect{x}) + \sum_{e \in S} \nabla_{e} F(\vect{x}) 
&= F(\vect{x}) + \sum_{e \in S} \mathbb{E}_{R} \bigl[ f\bigl(\vect{1}_{R \cup \{e\}}\bigr) - f\bigl(\vect{1}_{R}\bigr) \bigr] \\
&= \mathbb{E}_{R} \biggl[ f(\vect{1}_{R}) + \sum_{e \in S} \bigl[ f\bigl(\vect{1}_{R \cup \{e\}}\bigr) - f\bigl(\vect{1}_{R}\bigr) \bigr] \biggl] \\
&\geq \mathbb{E}_{R} \biggl[ f(\vect{1}_{R}) + \sum_{i=1}^{\ell} \bigl[ f(\vect{1}_{R \cup \{e_{1}, \ldots, e_{i}\}}) - 
								f(\vect{1}_{R \cup \{e_{1}, \ldots, e_{i-1}\}}) \bigr] \biggr] \\
&= \mathbb{E}_{R} \bigl[ f(\vect{1}_{R \cup S})  \bigl]  \geq \mathbb{E}_{R} \bigl[ f(\vect{1}_{S})  \bigl] \\
&= F\bigl( \vect{1}_{S} \bigr)
\end{align*}
the first inequality is due to the submodularity $f$ and the second one due to its monotonicity.
The lemma follows.
\end{proof}

%\begin{lemma}
%Algorithm \ref{algo:packing} yields a  $O(\ln r(\mathcal{M}))$-competitive fractional solution where $r$ is the rank of 
%matroid $\mathcal{M}$.
%\end{lemma}

%The previous lemma and Theorem \ref{thm:packing} lead to the following result. 

\begin{proposition}	\label{prop:max-submodular}
For any $0 < \eta \leq 1$, Algorithm \ref{algo:packing} gives a $(1 - \eta)$-consistent
and $O\bigl(1/\ln(1+ d \rho/\eta)\bigr)$-robust fractional solution to
the problem of online submodular maximization under packing constraints.
\end{proposition}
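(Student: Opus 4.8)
The plan is to derive Proposition~\ref{prop:max-submodular} as an immediate instantiation of the general bound in Theorem~\ref{thm:packing}, using Lemma~\ref{lem:sub-max-locally-smooth} to pin down the smoothness parameters. First I would recall that by Lemma~\ref{lem:sub-max-locally-smooth} the multilinear extension $F$ of a monotone submodular $f$ is $(1,1)$-locally-smooth, so we may take $\lambda = 1$ and $\mu = 1$ in Theorem~\ref{thm:packing}. Since $\lambda = 1 > 0$, the hypotheses of the theorem are satisfied, and it yields an $r(\eta)$-consistent and $r(\eta)\cdot\bigl(\frac{1}{2\ln(1+d\rho/\eta)+1}\bigr)$-robust algorithm, where $r(\eta) = \min_{\vect 0 \leq \vect u \leq \vect 1} F\bigl(\frac{\vect u}{1+\eta}\bigr)/F(\vect u)$.

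The one substantive step is bounding $r(\eta)$ from below by $1-\eta$. Here I would use the fact, recalled in the submodular subsection, that $F$ is concave along nonnegative directions: for any $\vect u \in [0,1]^{|\cal E|}$ and any scalar $t \in [0,1]$, concavity of $t \mapsto F(t\vect u)$ on $[0,1]$ together with $F(\vect 0) \geq 0$ gives $F(t\vect u) \geq t\,F(\vect u) + (1-t)F(\vect 0) \geq t\,F(\vect u)$. Applying this with $t = \frac{1}{1+\eta}$ yields $F\bigl(\frac{\vect u}{1+\eta}\bigr) \geq \frac{1}{1+\eta}F(\vect u)$, hence $r(\eta) \geq \frac{1}{1+\eta} \geq 1-\eta$ since $(1+\eta)(1-\eta) = 1-\eta^2 \leq 1$. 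Plugging $r(\eta) \geq 1-\eta$ into the consistency bound gives $(1-\eta)$-consistency directly.

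For the robustness bound, the factor coming from Theorem~\ref{thm:packing} with $\lambda=\mu=1$ is $r(\eta)\cdot\frac{1}{2\ln(1+d\rho/\eta)+1}$. Since $r(\eta) \geq 1-\eta > 0$ is a positive constant bounded away from $0$ (for fixed $\eta < 1$), and $2\ln(1+d\rho/\eta)+1 = \Theta\bigl(\ln(1+d\rho/\eta)\bigr)$, this whole quantity is $\Omega\bigl(1/\ln(1+d\rho/\eta)\bigr)$, i.e., $O\bigl(1/\ln(1+d\rho/\eta)\bigr)$ in the notation of the statement. This completes the proof.

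I do not anticipate a real obstacle: the proposition is a corollary, and the only place any care is needed is the concavity argument bounding $r(\eta)$, which is exactly the property (ii) listed for multilinear extensions of monotone submodular functions at the start of the subsection. One could alternatively invoke the general linear-functions estimate $r(\eta) = 1/(1+\eta)$ and note that submodular extensions inherit the same lower bound via concavity, but spelling out the one-line concavity inequality is cleanest.
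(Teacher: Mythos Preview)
Your proposal is correct and follows the same overall architecture as the paper: invoke Lemma~\ref{lem:sub-max-locally-smooth} to obtain $(\lambda,\mu)=(1,1)$, plug into Theorem~\ref{thm:packing}, and then lower-bound $r(\eta)$ by $1-\eta$. The only genuine difference is in how that last bound is obtained. The paper argues via two first-order inequalities derived from concavity in positive directions (one comparing $F(\vect u/(1+\eta))$ to $F(\vect u)$ and one to $F(\vect 0)$), combines them using monotonicity of $F$, and arrives directly at $r(\eta)\geq 1-\eta$. You instead observe that $t\mapsto F(t\vect u)$ is concave on $[0,1]$ and use the chord inequality $F(t\vect u)\geq tF(\vect u)+(1-t)F(\vect 0)\geq tF(\vect u)$ with $t=1/(1+\eta)$, which is shorter and even yields the slightly sharper intermediate estimate $r(\eta)\geq 1/(1+\eta)$ before relaxing to $1-\eta$. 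Both routes rely on exactly the same structural fact (property (ii) listed at the start of the subsection), so there is no conceptual gap; your version is simply a cleaner packaging of the same idea.
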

\begin{proof}
We first bound $r(\eta) = \min_{\vect{0} \leq \vect{u} \leq \vect{1}} F(\frac{1}{1 + \eta} \vect{u})/F(\vect{u})$. 
By the non-negativity and the concavity in positive direction of $F$, for any set $S \subseteq \mathcal{E}$, 
\begin{align}
F\biggl(\frac{\vect{u}}{1 + \eta}  \biggr) 
&\geq F(\vect{u}) - \biggl \langle \nabla F\biggl(\frac{\vect{u}}{1 + \eta} \biggr), \vect{u} - \frac{\vect{u}}{1 + \eta} \biggr \rangle
\label{eq:max-sub-1}
\\
F\biggl(\frac{\vect{u}}{1 + \eta} \biggr) 
 &\geq F(\vect{0}) + \biggl \langle \nabla F\biggl(\frac{\vect{u}}{1 + \eta} \biggr), \frac{\vect{u}}{1 + \eta}\biggr \rangle
 \geq  \biggl \langle \nabla F\biggl(\frac{\vect{u}}{1 + \eta} \biggr), \frac{\vect{u}}{1 + \eta} \biggr \rangle.
\label{eq:max-sub-2}
\end{align}
Therefore,
\begin{align}
\frac{F\biggl(\frac{\vect{u}}{1 + \eta}\biggr)}{F(\vect{u})}
&\geq 1 - \frac{\eta}{1 + \eta} \cdot \frac{\biggl \langle \nabla F\biggl(\frac{\vect{u}}{1 + \eta}\biggr), \vect{u} \biggr \rangle}{F(\vect{u})} 	\tag{by (\ref{eq:max-sub-1})}\\
&\geq 
1 - \frac{\eta}{1 + \eta} \cdot \frac{\biggl \langle \nabla F\biggl(\frac{\vect{u}}{1 + \eta}\biggr), \vect{u} \biggr \rangle}{F\biggl(\frac{\vect{u}}{1 + \eta}\biggr)}  	\tag{by monotonicity of $F$}\\
&\geq 
1 - \frac{\eta}{1 + \eta} \cdot \frac{\biggl \langle \nabla F\biggl( \frac{\vect{u}}{1 + \eta} \biggr), \vect{u} \biggr \rangle}{\biggl \langle \nabla F\biggl(\frac{\vect{u}}{1 + \eta}\biggr), \frac{\vect{u}}{1 + \eta}\biggr \rangle}
	\tag{by (\ref{eq:max-sub-2})} \\
&= 1 - \eta. 	\notag
\end{align}
So, $r(\eta) \geq 1 - \eta$. Therefore, the proposition holds by Theorem~\ref{thm:packing} and by the (1,1)-locally-smoothness of $F$ (Lemma~\ref{lem:sub-max-locally-smooth}).
\end{proof}

One can derive online randomized algorithms for the integral variants of these problems by rounding the fractional solutions.
For example, using the online contention resolution rounding schemes \cite{FeldmanSvensson16:Online-contention}, 
one can obtain randomized algorithms for several specific constraint polytopes, for example, knapsack polytopes, 
matching polytopes and matroid polytopes.

%!TEX root = main.tex

\section{Ad-Auction Revenue Maximization}		\label{sec:ad-auction}

\paragraph{Problem.} In the problem, we are given $m$ buyers, each buyer $1 \leq i \leq m$ has a budget $B_{i}$.  
Items arrive online and at the arrival of item $e$, 
each buyer $1 \leq i \leq n$ provides a bid $b_{i,e} \ll B_{i}$ for buying $e$.
In the integral variant of the problem, one needs to allocate the item to at most one of the buyers. For every buyer, the total bids of the allocated items should not exceed the budget. Note that this forms a packing constraint, similar to the ones studied in the previous section, except that we don't normalize the constraints by the factor $1/B_i$.
In the fractional variant of the problem, items can be allocated in fractions to buyers, not exceeding a total of $1$ among the fractions.  In both variants, the objective is to maximize the total revenue received from all buyers, which is the sum over all  allocations of corresponding bid, possibly multiplied with the fraction of the allocation.

In out setting, every item $e$ arrives with a prediction, suggesting a buyer to whom to allocate that item, if any.
For convenience we denote the items by the integers from $1$ to $n$.

\subsection{The algorithm}

\paragraph{Formulation.}
The fractional variant of problem can be expressed as the following linear program, where $x_{i,e}$ indicates the fraction at which item $e$ is allocated to buyer $i$.

\begin{minipage}[t]{0.45\textwidth}
\begin{align*}
\max  \sum_{e=1}^{n} & \sum_{i=1}^{m} b_{i,e} x_{i,e} & \\
\sum_{i=1}^{m} x_{i,e}  &\leq 1 & &  \forall e & (\beta_e) \\
\sum_{e=1}^{n} b_{i,e} x_{i,e} &\leq  B_{i} & & \forall i & (\alpha_i) \\
x_{i,e} &\geq 0 & & \forall i, e\\
\end{align*}
\end{minipage}
\quad
\begin{minipage}[t]{0.5\textwidth}
\begin{align*}
\min \sum_{i=1}^{n} B_{i} \alpha_{i} &+ \sum_{e=1}^{m} \beta_{e} \\ 
b_{i,e} \alpha_{i} + \beta_{e} &\geq b_{i,e}  & &  \forall i, e & (x_{i,e})\\  \\
\alpha_{i}, \beta_{e} &\geq 0 & & \forall i,e \\
\end{align*}
\end{minipage}
In the primal linear program, the first constraint ensures that an item can be allocated to at most one buyer and 
the second constraint ensures that a buyer $i$ does not spend more than the budget $B_{i}$.
The dual of the relaxation is presented in the right.  

\paragraph{Algorithm.} 

For the sake of simplicity in the algorithm description, we introduce a fictitious buyer, denoted as $0$, such that  
$b_{0, e} = 0$ for all items $e$. The non-assignment of an item $e$ to any buyer can be seen as being assigned 
to the fictitious buyer $0$ with revenue 0.
The purpose of buyer $0$   is to simplify the description of the algorithm.
 We define the constant $C = (1 + R_{\max})^{\frac{\eta}{R_{\max}}}$ where 
$R_{\max} = \max_{i,e} \frac{b_{i,e}}{B_{i}}$. Adapting the primal-dual algorithm presented in the previous section, we obtain the following algorithm for the fractional ad-auction problem.

\begin{algorithm}[ht]
\begin{algorithmic}[1]  
\STATE All primal and dual variables are initially set to 0.
\STATE For the analysis, we maintain for every buyer $i$ two sets $N(i), M(i)$, both initially empty. 
\FOR{each arrival of a new item $e$}
	\STATE Let $i^{*}$ be the buyer such that $x^{\pred}_{i^{*},e} = 1$. If either the prediction is not feasible or 
	 	there is no such $i^{*}$ (i.e., $x^{\pred}_{i',e} = 0 ~\forall i'$) then
		set $i^{*} = 0$.  
	\STATE Let $i$ be the buyer that maximizes $b_{i',e} (1 - \alpha_{i'})$. 
		If $b_{i,e} (1 - \alpha_{i}) \leq 0$ then 
		set $i = 0$.
	\STATE Set $\beta_{e} \gets \max \bigl \{0,  b_{i,e} (1 - \alpha_{i}) \bigr \}$	 \COMMENT{for the purpose of analysis}
	\IF{$b_{i,e} < b_{i^{*},e}$}
		\STATE Set $x_{i,e} \gets \eta$ and $x_{i^{*},e} \gets 1-\eta$.
		\STATE Define $\overline{b}_{i,e} = b_{i,e}/\eta$ 
		\STATE $N(i^{*}) \gets N(i^{*}) \cup \{e\}$. 		\COMMENT{for the purpose of analysis}
	\ELSE 
		\STATE Set $x_{i,e} \gets 1$ and define $\overline{b}_{i,e} = b_{i,e}$
		\COMMENT{includes case $\vect x^\pred$ is infeasible}
	\ENDIF
	\STATE Update $M(i) \gets M(i) \cup \{e\}$.  \COMMENT{for the purpose of analysis}
	\STATE Update $\alpha_{i} \gets \alpha_{i}\left( 1 + \frac{b_{i,e}}{B_{i}} \right) 
										+  \frac{b_{i,e}}{B_{i}} \cdot \frac{1}{C - 1}$.	
\ENDFOR
% \STATE Scale $\vect x$ by factor $1-R_\max$
\end{algorithmic}
\caption{Algorithm for Ad-Auctions Revenue Maximization.}
\label{algo:ad-auctions}
\end{algorithm}

\begin{lemma}	\label{lem:alpha}
For every $i$, it always holds that 
$$
\alpha_{i} \geq \frac{1}{C - 1} \biggl( C^{\frac{\sum_{e \in M(i)} b_{i,e}}{\eta B_{i}}} - 1 \biggr).
$$
\end{lemma}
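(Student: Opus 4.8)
The plan is to prove the bound by induction on the updates made to $\alpha_i$ during the execution of Algorithm~\ref{algo:ad-auctions}, mirroring the argument of Lemma~\ref{lem:bound-alpha} but now in a discrete setting. Observe that $\alpha_i$ is modified exactly when the buyer selected in step~5 equals $i$, which is also precisely when $e$ is inserted into $M(i)$; hence the two sides of the claimed inequality change simultaneously, and it suffices to verify the inductive step for a single such event. The fictitious buyer $0$ is harmless: since $b_{0,e}=0$, an update charged to buyer $0$ leaves $\alpha_0$ unchanged and contributes nothing to $\sum_{e\in M(0)}b_{0,e}$.

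For the base case, when $M(i)=\emptyset$ we have $\alpha_i=0$ and the right-hand side is $\frac{1}{C-1}\bigl(C^{0}-1\bigr)=0$, so the inequality holds. For the inductive step, let $e$ be the item being added to $M(i)$, write $s:=\sum_{e'\in M(i)}b_{i,e'}$ for the value of the sum before the update, and assume $\alpha_i\geq\frac{1}{C-1}\bigl(C^{s/(\eta B_i)}-1\bigr)$. Substituting this lower bound into the update rule $\alpha_i\gets\alpha_i\bigl(1+\tfrac{b_{i,e}}{B_i}\bigr)+\tfrac{b_{i,e}}{B_i}\cdot\tfrac{1}{C-1}$ and simplifying, the desired conclusion $\alpha_i\geq\frac{1}{C-1}\bigl(C^{(s+b_{i,e})/(\eta B_i)}-1\bigr)$ reduces, after clearing the factor $C-1$ and cancelling $C^{s/(\eta B_i)}$, to the single scalar inequality
\[
1+\frac{b_{i,e}}{B_i}\ \geq\ C^{\,b_{i,e}/(\eta B_i)}.
\]

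The core of the argument is this last inequality. Using the definition $C=(1+R_{\max})^{\eta/R_{\max}}$, its right-hand side equals $(1+R_{\max})^{t/R_{\max}}$ where $t:=b_{i,e}/B_i$, so the $\eta$ cancels; and by the definition of $R_{\max}$ we have $0\leq t\leq R_{\max}$. Since $t\mapsto(1+R_{\max})^{t/R_{\max}}=e^{t\ln(1+R_{\max})/R_{\max}}$ is convex, on the interval $[0,R_{\max}]$ it lies below the chord joining its endpoints $(0,1)$ and $(R_{\max},1+R_{\max})$, i.e.\ below the line $t\mapsto 1+t$, which gives exactly $1+t\geq(1+R_{\max})^{t/R_{\max}}$. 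Plugging this back completes the induction. I do not expect a genuine obstacle: the only points that need care are getting the exponents to line up so that $\eta$ cancels between $C$ and the $1/(\eta B_i)$ factor, and noting that the chord bound requires precisely the range $t\le R_{\max}$, which is where the definition of $C$ was tailored.
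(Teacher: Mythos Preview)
Your proposal is correct and follows essentially the same route as the paper's proof: an induction over updates to $\alpha_i$, substitution of the hypothesis into the update rule, and reduction to the scalar inequality $1+\tfrac{b_{i,e}}{B_i}\geq C^{\,b_{i,e}/(\eta B_i)}$. The only cosmetic difference is in justifying that last inequality---the paper uses the monotonicity of $y\mapsto\ln(1+y)/y$ to obtain $1+y\geq(1+z)^{y/z}$ for $y\leq z$, whereas you use the equivalent convexity/chord argument for $t\mapsto(1+R_{\max})^{t/R_{\max}}$---but these are two phrasings of the same fact.
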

\begin{proof}
We adapt the proof in \cite{BuchbinderNaor09:The-Design-of-Competitive} 
with a slight modification. 
The primal inequality $\alpha_i$ is proved by induction on the number of released items. 
Initially, when no item is released, the inequality  is trivially true. 
Assume that the inequality holds right before the arrival of an item $e$. 
The inequality remains unchanged for all but the buyer $i$ selected in line 5 of the algorithm, i.e.\ $i$ maximizes $b_{i,e} (1 - \alpha_{i})$. 
We denote by $\alpha_i$  the value before the update triggered by the arrival of $e$ and $\alpha'_i$ its value after the update.
We have
\begin{align*}
\alpha'_{i}
&= \alpha_{i} \cdot \left( 1 + \frac{b_{i,e}}{B_{i}} \right) 
										+  \frac{b_{i,e}}{B_{i}} \cdot \frac{1}{C - 1}	\\
&\geq \frac{1}{C - 1} \biggl( C^{\frac{\sum_{e' \in M(i) \setminus e } b_{i,e}}{\eta B_{i}}} - 1 \biggr)
			\cdot \left( 1 + \frac{b_{i,e}}{B_{i}} \right) 
										+   \frac{b_{i,e}}{B_{i}} \cdot \frac{1}{C - 1} \\
&=  \frac{1}{C - 1} \biggl( C^{\frac{\sum_{e' \in M(i) \setminus e } b_{i,e}}{\eta B_{i}}} 
						\cdot \left( 1 + \frac{b_{i,e}}{B_{i}} \right)  - 1 \biggr) 	\\						
&\geq  \frac{1}{C - 1} \biggl( C^{\frac{\sum_{e' \in M(i) \setminus e } b_{i,e}}{\eta B_{i}}} 
						\cdot  C^{\frac{b_{i,e} }{\eta B_{i}}}   - 1 \biggr) \\
&= \frac{1}{C - 1} \biggl( C^{\frac{\sum_{e \in M(i)} b_{i,e}}{\eta B_{i}}} - 1 \biggr).
\end{align*}
The first inequality is due to the induction hypothesis.
The second inequality is due to this sequence of transformations. 
For any $0 < y \leq z \leq 1$ we have
\begin{align*}
\frac{\ln(1+y)}{y} &\geq \frac{\ln(1+z)}{z} 
\\
\Leftrightarrow \quad \ln(1+y) &\geq \ln(1+z)y / z 
\\
\Leftrightarrow \quad 1+y &\geq (1+z)^{y/z},
\end{align*}
which we apply with $y=b_{i,e}/{B_i}$ and $z=R_{\max}$. By the definition of $C$ we obtain
$$
\biggl( 1+R_{\max} \biggr)^{\frac{1}{R_{\max}} \cdot \frac{b_{i,e} }{B_{i}}} 
= C^{\frac{b_{i,e}}{\eta B_{i}}}.
$$
This completes the induction step.
\end{proof}

\begin{lemma}	\label{lem:ad-auctions-primal-feasibility}
The primal solution is feasible up to a factor $(1 + R_{\max})$.
\end{lemma}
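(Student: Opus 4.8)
The plan is to check the two families of primal constraints of the ad-auctions LP separately. The per-item constraints $\sum_{i} x_{i,e}\le 1$ should hold with equality, and only the budget constraints $\sum_{e} b_{i,e}x_{i,e}\le B_{i}$ should be relaxed, by the factor $(1+R_{\max})$.

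For the per-item constraints I would just inspect the two branches of the \texttt{if} in Algorithm~\ref{algo:ad-auctions} for a fixed item $e$. In the branch $b_{i,e}<b_{i^{*},e}$ one must have $i^{*}\neq 0$ (else $b_{i,e}<b_{0,e}=0$, impossible) and $i\neq i^{*}$ (else $b_{i,e}=b_{i^{*},e}$), so the only nonzero entries of column $e$ are $x_{i,e}=\eta$ and $x_{i^{*},e}=1-\eta$, summing to $1$; in the \texttt{else} branch the only nonzero entry is $x_{i,e}=1$. Hence $\sum_{i}x_{i,e}=1$ for every $e$.

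For the budget constraint of a fixed buyer $i$, I would first observe that every item $e$ with $x_{i,e}>0$ is recorded in $M(i)\cup N(i)$, that for a fixed $i$ these two sets are disjoint (item $e$ is placed in $N(i)$ only when $i$ plays the role of $i^{*}$, which forces $i$ to differ from the buyer selected in line~5, whereas $e$ is placed in $M(i)$ only when $i$ \emph{is} that selected buyer), and that $x_{i,e}\in\{\eta,1\}$ for $e\in M(i)$ while $x_{i,e}=1-\eta$ for $e\in N(i)$. This gives
\[
\sum_{e} b_{i,e}x_{i,e}\ \le\ \sum_{e\in M(i)} b_{i,e}\ +\ (1-\eta)\sum_{e\in N(i)} b_{i,e}.
\]
I would bound the first sum via Lemma~\ref{lem:alpha}: let $e^{\dagger}$ be the last item ever added to $M(i)$; just before the update of $\alpha_{i}$ it triggers, $M(i)$ equals $M(i)\setminus\{e^{\dagger}\}$, and since $i$ was selected in line~5 we have $b_{i,e^{\dagger}}(1-\alpha_{i})>0$, hence $\alpha_{i}<1$. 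Plugging this into Lemma~\ref{lem:alpha} and using $C>1$ yields $\sum_{e\in M(i)\setminus\{e^{\dagger}\}} b_{i,e}<\eta B_{i}$, so $\sum_{e\in M(i)} b_{i,e}<\eta B_{i}+b_{i,e^{\dagger}}\le(\eta+R_{\max})B_{i}$ by the definition of $R_{\max}$. For the second sum I would use that $N(i)$ is populated only at items for which $i^{*}\neq 0$, i.e.\ while the prediction is still feasible, so $N(i)$ is contained in the set of items allocated to $i$ by a budget-respecting prefix of the predicted solution, whence $\sum_{e\in N(i)} b_{i,e}\le B_{i}$. Combining the two bounds, $\sum_{e} b_{i,e}x_{i,e}<(\eta+R_{\max})B_{i}+(1-\eta)B_{i}=(1+R_{\max})B_{i}$; the fictitious buyer $0$ is vacuous since $b_{0,e}=0$.

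The algebra is routine once Lemma~\ref{lem:alpha} is in hand. The main obstacle I anticipate is bookkeeping rather than computation: verifying that $M(i)$ and $N(i)$ really account for all positive allocations to buyer $i$ without overlap, and handling the moment the prediction is detected infeasible — after that point every arriving item takes the \texttt{else} branch with the true coefficient $\overline{b}_{i,e}=b_{i,e}$ and contributes only to some $M(\cdot)$, so $N(i)$ never grows beyond what a feasible prefix of the prediction permits, which is exactly what makes the $(1-\eta)B_i$ bound legitimate.
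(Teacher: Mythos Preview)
Your proof is correct and follows essentially the same approach as the paper: handle the per-item constraint by inspecting the two branches, split the budget of buyer $i$ into the $M(i)$-part (bounded via Lemma~\ref{lem:alpha} by $\eta B_i$ plus one extra bid) and the $N(i)$-part (bounded by $(1-\eta)B_i$ via feasibility of the prediction prefix), and sum. You are in fact slightly more explicit than the paper on two points it leaves implicit --- the disjointness of $M(i)$ and $N(i)$ and the ``last item $e^{\dagger}$'' argument behind the bound on $\sum_{e\in M(i)} b_{i,e}$ --- but the structure and the key estimates are identical.
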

\begin{proof}
The first primal constraint $\sum_{e} x_{i,e} \leq 1$ follows the values of $x_{i,e}$ and $x_{i^{*},e}$ in Algorithm~\ref{algo:ad-auctions}.
For the second primal constraint, we first prove that $\sum_{e=1}^{m} b_{i,e} \leq  B_{i}$ for every $i$. 
By Lemma~\ref{lem:alpha}, for every $i$, 
$$
\alpha_{i} \geq \frac{1}{C - 1} \biggl( C^{\frac{\sum_{e \in M(i)} b_{i,e}}{\eta B_{i}}} - 1 \biggr)
$$
So whenever $\sum_{e \in M(i)} b_{i,e} \geq \eta B_{i}$, we have $\alpha_{i} \geq 1$; so the algorithm 
stops allocating items to buyer $i$. Hence, the buyer $i$ can be allocated at most one additional item once her budget is 
already saturated. Hence, $\sum_{e \in M(i)} b_{i,e}  < \eta B_{i} + \max_{e} b_{i,e}$. 
Therefore, 
\begin{align*}
\sum_{e=1}^{m} b_{i,e}x_{i,e} = 
\sum_{e \in M(i)} b_{i,e}x_{i,e} + \sum_{e \in N(i)} b_{i,e}x_{i,e}
<  B_{i} + \max_{e} b_{i,e}
\end{align*}
where the latter is due to the feasibility of $N(i)$, the set of items assigned by the prediction to buyer $i$,
i.e., $\sum_{e \in N(i)} b_{i,e}x_{i,e} \leq  (1 - \eta) \sum_{e} b_{i,e} x^{\pred}_{i,e} \leq (1 - \eta) B_{i}$.
So, $\sum_{e=1}^{m} b_{i,e}x_{i,e} \leq B_{i}(1 + R_{\max})$.
\end{proof}

\begin{theorem}
The algorithm is $(1 - \eta)$-consistent and $\frac{1 - 1/C}{1 + R_{\max}}$-robust.
The robustness tends to $1 - e^{-\eta}$ when $R_{\max}$ tends to 0.
\end{theorem}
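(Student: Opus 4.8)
\emph{Consistency.} The plan is to compare the algorithm's revenue with $\mathcal{P}(I)$ item by item. If $\vect{x}^{\pred}$ is infeasible then $\mathcal{P}(I)=0$ and there is nothing to prove, so assume it is feasible. Fix an item $e$, let $i^{*}$ be the buyer with $x^{\pred}_{i^{*},e}=1$ (take $i^{*}=0$ if there is none), and let $i$ be the primal--dual maximizer chosen in line~5. In the branch $b_{i,e}<b_{i^{*},e}$ the algorithm collects $\eta\, b_{i,e}+(1-\eta)\,b_{i^{*},e}\ge (1-\eta)\,b_{i^{*},e}$ on item $e$; in the other branch it collects $b_{i,e}\ge b_{i^{*},e}\ge(1-\eta)\,b_{i^{*},e}$. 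Summing over all items gives $\mathcal{A}(I)\ge(1-\eta)\,\mathcal{P}(I)$.

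\emph{Robustness: dual feasibility and weak duality.} For robustness I would run the standard online primal--dual argument against the dual LP of the formulation, using the $(\alpha,\beta)$ maintained by the algorithm. When item $e$ arrives and $i$ maximizes $b_{j,e}(1-\alpha_{j})$, the choice $\beta_{e}=\max\{0,\,b_{i,e}(1-\alpha_{i})\}$ satisfies $\beta_{e}\ge b_{i,e}(1-\alpha_{i})\ge b_{j,e}(1-\alpha_{j})$ for every buyer $j$, and $\beta_{e}\ge 0\ge b_{j,e}(1-\alpha_{j})$ whenever $\alpha_{j}\ge 1$; hence $b_{j,e}\alpha_{j}+\beta_{e}\ge b_{j,e}$ at that moment. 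Since the $\alpha$'s are non-decreasing, every dual constraint, once satisfied, stays satisfied, so the terminal dual is feasible and by weak LP duality $D:=\sum_{i}B_{i}\alpha_{i}+\sum_{e}\beta_{e}\ge\mathcal{O}(I)$.

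\emph{Robustness: per-item accounting, rescaling, limit.} Next I would bound, for each arriving item $e$, the increment $\Delta P$ of the raw primal revenue $\sum_{i,e}b_{i,e}x_{i,e}$ against the increment $\Delta D$ of the dual objective. Only $\alpha_{i}$ changes, by $\Delta\alpha_{i}=\alpha_{i}\tfrac{b_{i,e}}{B_{i}}+\tfrac{b_{i,e}}{B_{i}}\cdot\tfrac{1}{C-1}$, and $\beta_{e}$ moves from $0$; when $i\neq0$, line~5 forces $b_{i,e}(1-\alpha_{i})>0$, hence $\alpha_{i}<1$ and $\beta_{e}=b_{i,e}(1-\alpha_{i})$, so
\[
\Delta D = B_{i}\Delta\alpha_{i}+\beta_{e}= b_{i,e}\alpha_{i}+\tfrac{b_{i,e}}{C-1}+b_{i,e}(1-\alpha_{i})=\tfrac{C}{C-1}\,b_{i,e}.
\]
In the ``else'' branch $\Delta P=b_{i,e}=(1-\tfrac1C)\Delta D$; in the ``if'' branch $\Delta P=\eta b_{i,e}+(1-\eta)b_{i^{*},e}\ge b_{i,e}=(1-\tfrac1C)\Delta D$ using $b_{i^{*},e}>b_{i,e}$; and if $i=0$ then $\Delta D=0\le\Delta P$. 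Summing, the raw solution has value at least $(1-\tfrac1C)\,D\ge(1-\tfrac1C)\,\mathcal{O}(I)$. By Lemma~\ref{lem:ad-auctions-primal-feasibility} this solution violates budgets by at most a factor $1+R_{\max}$ and the assignment constraints not at all, so scaling all $x_{i,e}$ by $1/(1+R_{\max})$ yields a feasible allocation of value at least $\tfrac{1-1/C}{1+R_{\max}}\,\mathcal{O}(I)$. Finally $C=(1+R_{\max})^{\eta/R_{\max}}\to e^{\eta}$ as $R_{\max}\to0$, so $\tfrac{1-1/C}{1+R_{\max}}\to 1-e^{-\eta}$.

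\emph{Main obstacle.} I expect the only nonroutine step to be the per-item inequality $\Delta P\ge(1-1/C)\Delta D$: it relies on line~5 guaranteeing $\alpha_{i}<1$ whenever a genuine buyer is picked (so that $\beta_{e}=b_{i,e}(1-\alpha_{i})$ and the update constant $1/(C-1)$ makes $\Delta D=\tfrac{C}{C-1}b_{i,e}$ exactly), together with the observation that in the split branch the extra $(1-\eta)b_{i^{*},e}$ collected from the predicted buyer dominates the ``loss'' of taking only $\eta b_{i,e}$ from the primal--dual buyer, precisely because $b_{i^{*},e}$ is the larger bid. The treatment of the fictitious buyer $0$ and of buyers already at $\alpha_{i}\ge1$ is then bookkeeping.
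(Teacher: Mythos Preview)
Your proposal is correct and follows essentially the same approach as the paper: a per-item comparison of the primal increment against the dual increment (yielding $\Delta P\ge(1-1/C)\Delta D$), followed by the $(1+R_{\max})$ rescaling from Lemma~\ref{lem:ad-auctions-primal-feasibility}, and an item-by-item consistency argument. The only difference is that you spell out the dual feasibility verification and the $i=0$ bookkeeping explicitly, whereas the paper leaves these implicit.
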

\begin{proof}
First, we establish robustness. At the arrival of item $e$, the increase in the primal is 
$$
\begin{cases}
	(1-\eta) \cdot b_{i^{*},e} + \eta \cdot b_{i,e} & \text{ if } b_{i,e} < b_{i^{*},e}, \\
	b_{i,e} & \text{ if } b_{i,e} \geq b_{i^{*},e} 
\end{cases}
$$
which is always larger than $b_{i,e}$. Besides, the increase in the dual is 
%\begin{align*}
%B_{i} \Delta \alpha_{i} + B_{i^{*}} \Delta \alpha_{i^{*}} + \beta_{e}
%&= \frac{\eta}{1+\eta} \biggl( b_{i,e} \alpha_{i} + \frac{b_{i,e}}{c-1} \biggr)
%+ \frac{1}{1+\eta} \biggl( b_{i^{*},e} \alpha_{i^{*}} + \frac{b_{i^{*},e}}{c-1} \biggr) \\
%%
%& \qquad + \frac{\eta}{1+\eta} b_{i,e} (1 - \alpha_{i}) + \frac{1}{1+\eta} b_{i^{*},e} (1 - \alpha_{i^{*}}) \\
%%
%&= \biggl( \frac{1}{1+\eta} \cdot b_{i^{*},e} +  \frac{\eta}{1+\eta} \cdot b_{i,e} \biggr) \biggl( 1 + \frac{1}{c-1} \biggr)
%\end{align*}
\begin{align*}
B_{i} \Delta \alpha_{i}  + \beta_{e}
&= b_{i,e} \alpha_{i} + \frac{b_{i,e}}{C-1} 
 + b_{i,e} (1 - \alpha_{i}) 
 = \biggl( 1 + \frac{1}{C-1} \biggr) b_{i,e}
 = \frac{C}{C - 1}.
 \end{align*}
Hence, by Lemma~\ref{lem:ad-auctions-primal-feasibility}, the robustness is 
$\frac{C-1}{C} \cdot \frac{1}{1+R_{\max}}$.

Secondly we establish consistency quite easily. At every time the prediction solution gets a profit $b_{i^{*},e}$, the algorithm achieves a profit 
at least $(1 - \eta) b_{i^{*},e}$.
\end{proof}

\subsection{Experiments}

We run an experimental evaluation of Algorithm~\ref{algo:ad-auctions}. Source code is publicly available\footnote{\url{http://www.lip6.fr/christoph.durr/packing}}.
For this purpose we constructed a random instance on 100 buyers and 10,000 items, adapting the model described in \cite{Lavastida20:Predictions-Matching}. Every item $e$ receives a bid from exactly 6 random buyers. This means that in average every bidder makes 600 bids. The value of the bid follows a lognormal distribution with mean $1/2$ and deviation $1/2$ as well. By choosing the budget of the bidders we can tune the hardness of the instance, under the constraint that $R_{\max}$ remains reasonably small. We choose the budget as a $1/10$ fraction of the total bids, leading to a value $R_{\max}\approx 0.19$.

The competitive ratio is determined using the fractional offline solution. The integrality gap of our instance is very close to $1$, roughly $1.0001$ in fact. For the prediction, we computed first the optimal integral solution, which is a partial mapping from items to buyers. This solution was perturbed as suggested by \cite{BamasMaggiori20:The-Primal-Dual-method}: Every item was mapped, with probability $\epsilon$,  to a uniformly chose random buyer, among the buyers who bid on the item.

The experiments are shown in Figure~\ref{fig:experiments}, and indicate clearly the benefit of the prediction when its perturbation is small.  Note that the ratio is not close to $1$ in that case for $\eta$ close to zero, because the algorithm does not follow blindly the prediction. 
Infeasibility of the prediction was detected at 82\% of the input sequence for $\epsilon=0.01$ and at 64\% for $\epsilon=0.1$.
As expected, with the performance degrades with the perturbation of the prediction.  
Note that the observed robustness is not monotone in $\eta$, unlike the bound shown in this paper.  We think that this performance degradation for $\eta$ around $1/2$ is due to the rather simplistic mixture between the primal-dual solution and the predicted solution.

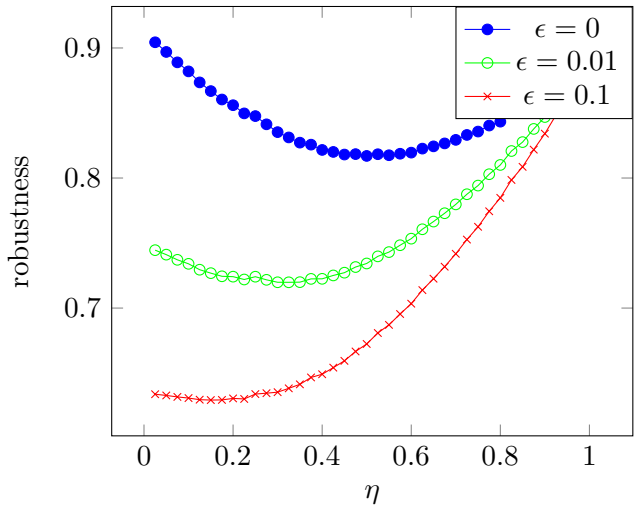
\begin{figure}[ht]
\begin{center}
	\begin{tikzpicture}[scale=1]
		\begin{axis}[
xlabel={$\eta$},
ylabel={robustness},
yticklabel style={/pgf/number format/precision=5},
legend style={at={(1,1)},anchor=north east},
legend entries={$\epsilon=0$,$\epsilon=0.01$,$\epsilon=0.1$}
]
\addplot [blue,mark=*] table {ratio_0.0.dat};
\addplot [green,mark=o] table {ratio_0.01.dat};
\addplot [red,mark=x] table {ratio_0.1.dat};
\end{axis}
	\end{tikzpicture}
\end{center}
\caption{Experimental evaluation}
\label{fig:experiments}
\end{figure}

\section{Conclusion}
%In this paper, we have presented primal-dual approaches based on configuration 
%linear programs to design competitive algorithms for covering problems with non-convex objectives.
%Non-convexity until now is considered as a strong barrier in optimization. 
%We hope that our approach would contribute some elements toward the study of non-linear/non-convex problems.

In the paper, we have presented primal-dual 
frameworks to design algorithms with predictions for non-linear problems with packing constraints.
Through applications, we show the potential of our approach and provide useful ideas/guarantees 
in incorporating predictions into solutions to problems with high impact such as submodular optimization and ad-auctions. 
An interesting direction is to prove lower bounds for non-linear packing problems
in terms of smoothness and confidence parameters.

\bibliographystyle{plainnat}
\bibliography{configuration}

\end{document}